\documentclass[a4paper,UKenglish,cleveref,autoref,thm-restate]{lipics-v2021}

\pdfoutput=1 

\usepackage[ruled]{algorithm2e}
\usepackage{xcolor}
\usepackage{comment}
\usepackage{booktabs}
\usepackage{multicol}
\hideLIPIcs 

\title{Dynamic Breadth First Search with Predictions}

\titlerunning{Dynamic BFS with Predictions}

\author{Shahbaz Khan}{Department of Computer Science and Engineering, Indian Institute of Technology, Roorkee, India \and \url{https://shahbazk.github.io/}}{shahbaz.khan@cs.iitr.ac.in}{https://orcid.org/0000-0001-9352-0088}{}

\author{Shubham Kumar Verma}{Department of Computer Science and Engineering, Indian Institute of Technology, Roorkee, India}{shubham_kv@cs.iitr.ac.in}{}{}

\author{Utkarsh Lohiya}{Department of Computer Science and Engineering, Indian Institute of Technology, Roorkee, India}{utkarsh_l@cs.iitr.ac.in}{}{}

\authorrunning{S.~Khan, S.~K.~Verma, and U.~Lohiya} 

\Copyright{Shahbaz Khan, Shubham Kumar Verma, and Utkarsh Lohiya} 

\ccsdesc[500]{Theory of computation~Graph algorithms analysis}
\ccsdesc[300]{Theory of computation~Online algorithms}
\ccsdesc[300]{Theory of computation~Dynamic graph algorithms}

\keywords{dynamic graphs, breadth-first search, learning-augmented algorithms,
algorithms with predictions, batch updates}

\acknowledgements{We thank the anonymous reviewers for their valuable
feedback.}

\nolinenumbers

\newcommand{\UP}[1]{\ensuremath{\text{\texttt{UP}}[#1]}}

\newcommand{\LLq}[1]{\ensuremath{\mathcal{L}_{\mathrm{L}}[#1]}}
\newcommand{\LRq}[1]{\ensuremath{\mathcal{L}_{\mathrm{R}}[#1]}}
\newcommand{\Par}[2][]{\ensuremath{\texttt{Par}_{#1}[#2]}}
\newcommand{\Level}[2][]{\ensuremath{\texttt{L}_{#1}[#2]}}
\newcommand{\dg}[2][]{\ensuremath{\texttt{deg}_{#1}(#2)}}
\newcommand{\In}[1]{\ensuremath{\texttt{In}[#1]}}
\newcommand{\Out}[1]{\ensuremath{\texttt{Out}[#1]}}

\newcommand{\hLevel}[2][]{\ensuremath{\hat{\texttt{L}}_{#1}[#2]}}
\newcommand{\hPar}[2][]{\ensuremath{\hat{\texttt{Par}}_{#1}[#2]}}

\begin{document}

\maketitle

\begin{abstract}
Given a graph $G=(V,E)$ having $n$ vertices and $m$ edges, we investigate the problem of maintaining its Breadth-First Search (BFS) tree from a fixed source $s\in V$, subject to an online sequence of edge insertions and/or deletions in the prediction model.
Our approach leverages a \emph{predicted} update sequence 
that can be preprocessed to build data structures based on anticipated updates, aiding online processing. We present algorithms for incremental (insertions-only), decremental (deletions-only), and fully dynamic (insertions and deletions) settings that maintain a BFS tree (parent and level information). 
Classically, without using predictions, the incremental and decremental BFS tree can be maintained in total $O(mn)$ time using ES Trees~[JACM81], resulting in amortized $O(n)$ update time. However, in the worst case, it requires $\Omega(m)$ time. Moreover, for maintaining BFS trees under fully dynamic edge updates, no algorithm better than trivial recomputation after every update is known, requiring $O(m)$ time per update. Further, assuming the combinatorial BMM conjecture, no algorithm can maintain incremental or decremental BFS polynomially faster than $O(mn)$~[FOCS14], even when the entire update sequence is known in advance~[STOC15].

Our complexity bounds are expressed in terms of error measures of the prediction,
where \textit{error vertices} are those having incorrectly predicted distance, with the corresponding difference called their \textit{error}. We define the vertex prediction error $\eta_{v}$ as the sum of the degree of \textit{error vertices}, and the weighted vertex prediction error $\eta^*_{v}$ as the \textit{error weighted} sum of the degree of \textit{error vertices}. Additionally, we define $\eta_e$ as the number of incorrectly predicted edge updates.

For incremental and decremental BFS our algorithm requires respectively $O(\eta_v + \eta_e)$ and $O(\min\{m,\eta^*_v  + \eta_e\})$ worst case update times using $O(mn)$ preprocessing time and space, and total update time of $O(\eta^*_v  + \eta_e)$.
We also present the space-efficient extensions of the above algorithms requiring $O(n^2)$ and $O(\frac{mn}{k}+n^2)$ space for any $k\in[1,n]$, at the expense of additional $O(n\log n)$ and $O(nk)$ time in the worst case and total time, respectively.
For fully-dynamic updates, our algorithm requires an update time of $O(\min\{m,\eta^*_v+\eta_e\})$ in the worst case. At its core, we extend the classical ES Trees~[JACM81] for batch updates and fully dynamic updates, which may be of independent interest. Notably, this simple extension of ES Trees for batch updates is sufficient to give a competitive prediction algorithm, which can thus be generalized to other graph problems. Finally, we consider error correction mechanisms to further improve the performance of our results.

\end{abstract}
\newpage

\section{Introduction}
\label{sec:introduction}
The area of dynamic graph algorithms deals with such graphs that evolve over time by a sequence of online updates, i.e., insertion or deletion of edges. The goal is to maintain the data structure or solution after every such update, much faster than trivially recomputing it from scratch after every update. An algorithm is called \textit{fully dynamic} if it allows both insertion and deletion updates. It is called \textit{partially dynamic}, i.e., \textit{incremental} or \textit{decremental} when the updates are limited to insertions or deletions, respectively. Some of the prominent graph problems studied in this area include connectivity~\cite{Frederickson85,HenzingerK99,HolmLT01,KapronKM13,ChuzhoyGLNPS20}, reachability~\cite{italiano86,even-shiloach,HenzingerKN14,BernsteinGW19,DemetrescuI08}, shortest paths~\cite{DemetrescuI04,Thorup05,RodittyZ11}, etc. Several such algorithms are also shown to be optimal conditioned on popular conjectures~\cite{RodittyZ11,AbboudW14,HenzingerKNS15}. 

In this paper, we study the problem of maintaining \textit{Breadth First Search (BFS) Trees} (or shortest path trees for unweighted graphs) from a given source under dynamic updates. The applications of BFS trees range from network routing and distance estimation to social-influence analysis and web crawling, which fundamentally maintain BFS distances under a stream of edge updates. Given a graph $G(V,E)$ having $n$ nodes and $m$ edges, its BFS tree from any source $s\in V$ can be computed in $O(m+n)$ time. Recomputing the BFS tree from scratch after each update is thus prohibitive when updates arrive at high frequency on large
networks. The classical result by Even and Shiloach~\cite{even-shiloach} maintains a partially dynamic BFS tree in total $O(mn)$ or amortized $O(n)$ update time, a bound that remained the state of the art for directed graphs for decades. Later, Roditty and Zwick~\cite{RodittyZ11} showed that no combinatorial algorithm can improve the bound using the All Pairs Shortest Paths Conjecture. Further, it was shown that no polynomial improvement for partially dynamic BFS was possible using the combinatorial BMM conjecture by Abboud and Williams~\cite{AbboudW14}, even for offline updates (update sequence known in advance)~\cite{HenzingerKNS15}. 
However, in the worst case, the update time can still be $O(m)$. This was also shown to be tight by Henzinger et al.~\cite{HenzingerKNS15} using the Online Matrix Vector product (OMv) conjecture. 
For the fully dynamic setting, no algorithm is known to improve the trivial recomputation in $O(m)$ time. However, a fully dynamic version of ES trees was shown by Hanauer et al.~\cite{hanauer2020fully} to perform well in practice despite having no better theoretical bounds.

We study the classical dynamic BFS tree problem in the prediction model.
A motivating observation about real-world update sequences is that they are
rarely adversarial: network changes tend to follow patterns, user behaviour is often repetitive, and future updates can frequently be anticipated from
historical data or learned models. The emerging paradigm of \emph{algorithms
with predictions} (also called \emph{learning-augmented algorithms}) formalises
this intuition by equipping classical algorithms with a \emph{prediction}—an
auxiliary input, potentially produced by a machine-learning oracle. The goal is to determine if predictions enable provably faster algorithms while retaining correctness
guarantees regardless of prediction quality. The key desired properties of such algorithms are \emph{consistency} (optimal performance when predictions are perfect),
\emph{robustness} (no worse than the best prediction-free algorithms even for the worst prediction), and
\emph{smooth degradation} (performance deteriorates gracefully between the two
extremes). In recent years, the area of algorithms with predictions has received significant attention, particularly for online problems and data structures (see~\cite{MV21,MV22} for a survey), as well as for dynamic graph algorithms. Brand et al.~\cite{brand2024dynamic} presented algorithms for dynamic transitive closure and approximate all pairs shortest paths, while McCauley et al.~\cite{mccauley2024incremental} presented algorithms for incremental topological ordering and cycle detection. Liu and Srinivas~\cite{liu2024predicted} presented a general framework applicable to a plethora of problems. Henzinger et al.~\cite{henzinger2024complexity} mostly focused on lower bounds conditioned on the OMv conjecture. Notably, these results employed different \textit{error measures} to analyze their algorithms.


\subparagraph*{Related Work.}
An interesting variant of dynamic graph problems is the offline dynamic problems~\cite{EPPSTEIN94,KL15,PengSS19}, where the update sequence is known in advance. This resembles the algorithms with predictions model for perfect predictions. In fact, most conditional lower bounds for dynamic graph problems also hold for this simpler variant~\cite{brand2024dynamic}. 

A related problem for weighted graphs is the single source shortest path (SSSP), and its all source version: all pairs shortest path (APSP). For positively weighted graphs, 
King~\cite{King99} extended the ES trees for partially dynamic SSSP to require $O(mnW)$ total time, where $W$ is the maximum weight of an edge. However, fully dynamic SSSP cannot be solved faster than simply recomputing the solution from scratch, which also has a matching lower bound~\cite{HenzingerKNS15}. Incremental APSP can be solved in total $O(n^3W)$ update time~\cite{AusielloIMN91}, where $W$ is the maximum weight of an edge. Decremental and fully dynamic APSP respectively requires total update time of $O(n^3S)$ and $O(mn^2\log^4n)$~\cite{DemetrescuI04}, where $S$ is the number of distinct edge weights. 

Another related problem is the single source reachability (SSR), and its all source version: all pair reachability or transitive closure (TC). Incremental SSR and TC can be respectively maintained in  $O(m)$ and $O(mn)$ total time~\cite{italiano86}. Whereas, the decremental SSR is no better than decremental BFS, requiring $O(mn)$ total time~\cite{even-shiloach}. 
Also, decremental TC can be maintained in total $O(mn)$ time~\cite{Lacki13,Roditty13}. However, for fully dynamic updates, none of these problems improves over trivially recomputing from scratch, which is also matched by the corresponding lower bound~\cite{AbboudW14}. Further, all these algorithms also require $\Omega(m)$ worst-case update time, having matching lower bounds~\cite{HenzingerKNS15}.


\subparagraph*{Our results.}
To analyze our proposed algorithms, we describe the error measures for the predictions. The vertices whose shortest distance from $s$ is incorrectly predicted are called \textit{error vertices}, with the corresponding difference called their \textit{error}. The vertex prediction error $\eta_{v}$ is the sum of degrees of \textit{error vertices} (hence $\eta_v\leq m$), and the weighted vertex prediction error $\eta^*_{v}$ is the \textit{error weighted} sum of degrees of \textit{error vertices} (hence $\eta^*_v\leq mn$). The edge prediction error $\eta_e$ is the number of incorrectly predicted updates (hence $\eta_e\leq m$).




\begin{enumerate}
    \item 
\textbf{Dynamic BFS algorithms in the prediction model.}
We first describe our results for maintaining incremental, decremental, and fully dynamic BFS in the prediction mode.

\vspace{0.5em}
\begin{restatable}[Incremental BFS]{thm}{incBFS}
    Given a graph $G(V,E)$ having $n$ vertices and $m$ edges undergoing edge insertions with a predicted sequence of updates, the BFS tree can be reported in worst case $O(\eta_e+\eta_v)$ time using $O(mn)$ preprocessing time and space, with total update time $O(\eta^*_v+\eta_e)$.
\label{thm:incBFS}
\end{restatable}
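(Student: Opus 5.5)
The plan is to preprocess the predicted insertion sequence offline and then, online, repair the stored predicted BFS state with an Even--Shiloach style batch procedure whose work is charged to error vertices and mispredicted edges.

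\textbf{Preprocessing.} Let the predicted sequence be $\hat e_1,\dots,\hat e_t$. I would simulate it on $G$ with an incremental ES tree (the classical one, extended to batch insertions). For every prefix $i$ I store the predicted levels $\hLevel[i]{v}$ and parents $\hPar[i]{v}$. Storing full arrays would cost $O(n)$ per step, so instead I record only the changes $(v,\text{old},\text{new})$ produced by the simulation. In the incremental setting each vertex's level only decreases, from at most $n$ to at least $0$. So there are $O(n)$ level changes per vertex and $O(n^2)$ changes overall. Parent changes are charged to ES scanning work, which totals $O(mn)$. Hence preprocessing is $O(mn)$ time and space. I also index the predicted updates by time, so that when a real update arrives I can identify in $O(1)$ the corresponding predicted prefix. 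Since $t\le m$ in the incremental case, this fits within the stated bounds.

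\textbf{Online update.} At real time $i$, the real graph $G_i$ differs from the predicted $\hat G_i$ by at most $\eta_e$ edges: some inserted edges were not predicted, and some predicted edges have not yet arrived. I maintain the real levels $\Level{v}$ as a correction layer over the predicted ones. Only vertices whose real level differs from the predicted level are stored explicitly, namely the error vertices. To process the update, I start from the predicted state at time $i$ and apply a batch ES correction over the $O(\eta_e)$ discrepancy edges.
\begin{itemize}
\item For extra real edges $(u,v)$ that would lower a level, I relax them and propagate the decrease in BFS order through a bucketed queue.
\item For predicted edges that are absent, the endpoints whose predicted parent edge is missing must have their level raised. This is handled by the standard ES ``find a new parent at the same level, else increase the level'' step.
\end{itemize}
The key claim is that every vertex touched beyond the $O(\eta_e)$ discrepancy endpoints is an error vertex at time $i$. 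A vertex whose predicted level equals its real level needs no change, and the propagation stops at it. Each touched vertex costs $\deg$ to scan, which gives the worst-case bound $O(\eta_e+\eta_v)$. To keep this bound even when a vertex's level moves by more than one, I would use a Dijkstra-like bucket relaxation in the correction, rather than unit-step ES increments, so that each error vertex is scanned $O(1)$ times per update. This pays $\deg(v)$ once instead of $\deg(v)\cdot|\text{error}|$.

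\textbf{Total update time.} For the total time $O(\eta^*_v+\eta_e)$ I would instead keep the corrected real ES tree persistent across updates. Along the run, each error vertex's real-minus-predicted deviation changes monotonically in amortized terms, and each unit of change costs one scan of its adjacency list. Summing $\deg(v)\cdot|\text{error}(v)|$ gives $\eta^*_v$, and each wrong edge contributes $O(1)$.

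\textbf{Main obstacle.} I expect the hardest step to be reconciling the two views without paying to rebuild. Switching from predicted prefix $i-1$ to prefix $i$ must not cost more than the predicted change plus the error work. Since the real update is unknown, the scan also cannot be charged to vertices whose levels are correct. I would handle this by storing the predicted levels as versioned arrays, which give $O(1)$ lookup of $\hLevel[i]{v}$, and by starting the correction only from discrepancy edge endpoints. The remaining step is the careful argument that the propagation frontier consists solely of error vertices.
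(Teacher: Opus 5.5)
\textbf{Gap: the wrong reference snapshot.} In the theorem, $\eta_e$, $\eta_v$ and $\eta^*_v$ are measured against $\hat{T}_i$ for the \emph{last correct step} $i$, the longest common prefix of $U$ and $\hat{U}$. They are not measured against the prediction at the current time. You instead repair from the same-time prediction $\hat{T}_j$ and call $v$ an error vertex when $\hat{L}_j[v]\neq L_j[v]$. That is a different quantity, incomparable with the theorem's, and your algorithm does not achieve the stated bound. Suppose $\hat{e}_1=(s,x)$ lowers the levels of $x$ and of many of its descendants, while the real $e_1$ changes no level. Then $i=0$, $j=1$, $\eta_e=1$ and $\eta_v=0$ (since $L_1=L_0=\hat{L}_0$), so the theorem asks for $O(1)$ time. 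Your correction, however, starts from $\hat{T}_1$ and must undo $(s,x)$, raising all those vertices back at a cost up to their total degree, i.e.\ up to $\Theta(m)$.

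The same choice turns an insertions-only problem into a mixed batch, which opens two further holes. First, a vertex whose predicted parent edge disappears but whose level is unchanged must find a replacement parent. With only stored levels and parents this means scanning its in-list, a cost charged to no error vertex. So your claim that every touched non-endpoint vertex is an error vertex is false. The standard remedy, per-snapshot upper-parent lists, costs $\Theta(m^2)$ space, beyond $O(mn)$ when $m\gg n$. Second, your total-time argument needs the deviation $L_t[v]-\hat{L}_t[v]$ to change monotonically. It does not, because the predicted levels move as well.

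\textbf{The missing idea.} Repair from the last agreeing snapshot. Since $\hat{G}_i=G_i\subseteq G_j$, $\hat{T}_i$ is an exact BFS tree of $G_i$. $G_j$ arises from $G_i$ by the $j-i=\eta_e$ real insertions $e_{i+1},\dots,e_j$ alone.

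Run your bucketed, Dijkstra-like relaxation from $\hat{T}_i$, removing a vertex from its bucket when it is lowered again; this is \textsc{batchInsertEdge}. A vertex is enqueued only when its level drops strictly below $\hat{L}_i[v]$, so every scanned vertex is an error vertex in the theorem's sense. Each such vertex is popped once. This gives $O(\eta_e+\eta_v)$ with no deletions and no upper-parent lists.

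For the total time, keep the real tree across steps, starting from $T_i=\hat{T}_i$, and apply each real insertion singly. Levels then only decrease, and each decrease of $v$ costs $O(\deg(v))$. There are at most $\hat{L}_i[v]-L_j[v]$ such decreases, which sums to $O(\eta^*_v+\eta_e)$.

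Finally, with $O(mn)$ space simply store full snapshots. A change log alone does not give the $O(1)$ lookup of $\hat{L}_i[v]$ you assume.
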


\begin{restatable}[Decremental BFS]{thm}{decBFS}
    Given a graph $G(V,E)$ having $n$ vertices and $m$ edges undergoing edge deletions with a predicted sequence of updates, the BFS tree can be reported in worst case $O(\min\{m,\eta_e+\eta^*_v\})$ time using $O(mn)$ preprocessing time and space and total time $O(\eta^*_v+\eta_e)$.
\label{thm:decBFS}
\end{restatable}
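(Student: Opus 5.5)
We prove Theorem~\ref{thm:decBFS} by preprocessing the predicted deletion sequence offline and then repairing the predicted BFS tree online with a batch version of the ES tree.

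\textbf{Preprocessing.} Let the predicted deletion sequence be $\hat e_1,\dots,\hat e_T$ with $T\le m$. We simulate it offline with a standard decremental ES tree, which takes $O(mn)$ total time. Each vertex $v$ only ever increases its level, so its level changes at most $n$ times. For each $v$ we store the piecewise-constant history $\hLevel{v}$ as a list of pairs (time, level, parent). This is exactly the information the ES tree touches, so it needs $O(mn)$ space, and a lookup costs $O(1)$ amortized via a pointer that advances with time. We also store, for each vertex, its adjacency list in the predicted graph, sorted by predicted level in the ES-tree manner. This gives the $O(mn)$ preprocessing time and space.

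\textbf{Online step.} At time $t$ the real graph is $G_t$. It differs from the predicted graph $\hat G_t$ by a set of edges whose size is bounded by the number of mispredicted updates so far; each real update either matches $\hat e_t$ or contributes to $\eta_e$. We keep the true levels $\Level{v}$, which we treat as ``predicted level plus correction'', and we store explicitly only the vertices where the two differ. When the update arrives we read off the predicted changes at time $t$. We then run a batch ES procedure seeded with three sets: (i) vertices whose true level differs from the prediction, (ii) endpoints of mispredicted edges, and (iii) vertices whose predicted level changes at $t$ but whose true level stays the same (or the other way round). The batch ES procedure processes candidate vertices in increasing level order using a bucket queue. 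For each candidate it scans incident edges to find a neighbour at level one less; if none exists, it raises the level by one and re-enqueues the vertex together with the children it affects. Here we use that in the decremental setting true distances are monotone nondecreasing, so ES-style raising is valid.

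\textbf{Accounting.} Each unit raise of a vertex costs $O(\deg(v))$, with the usual ES pointer trick so that scans are charged once per level. A vertex whose true level equals its predicted level is never scanned beyond $O(1)$ to verify its predicted parent. That verification works whenever the parent edge still exists in $G_t$, and any failure is charged to $\eta_e$. So the work is bounded by $\sum_{v\text{ error}} |\Level{v}-\hLevel{v}|\deg(v)+O(\eta_e)=O(\eta^*_v+\eta_e)$. If instead the batch ever exceeds $m$ work, we abort and recompute BFS from scratch in $O(m)$ time, which gives the $\min\{m,\cdot\}$ worst-case bound. For the total bound, we argue that the errors telescope: once a vertex's true and predicted levels agree, they agree until a new misprediction. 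So the total is the sum over the run of the per-step error measures, as claimed.

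\textbf{Main obstacle.} The hardest step is the charging argument in the accounting paragraph. Predicted levels jump around in time, while the true levels move only monotonically, so a vertex's error $|\Level{v}-\hLevel{v}|$ may shrink or grow at each step. We must show that the batch ES work at step $t$ is bounded by the error degrees measured at step $t$, rather than by the predicted change at $t$. I would first prove a lemma: after processing, every vertex with zero error has its predicted parent valid in $G_t$ or is adjacent to a mispredicted edge. I would then process from the lowest affected level upward, so that correct vertices are certified in $O(1)$ each, charged to the predicted changes already paid for in preprocessing. The danger is that the set of predicted changes at step $t$ can itself be large. We avoid paying for it online by storing, at preprocessing time, the parent updates as explicit lists: when there is no error, the online step copies these lists lazily by just switching the time pointer.
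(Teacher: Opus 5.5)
There is a genuine gap, and it lies in the reference tree you repair from. You express the true levels as corrections to the prediction at the \emph{current} step, $\hat{T}_t$. Once the sequences diverge, however, $\hat{G}_t$ and $G_t$ differ in both directions: $G_t$ still contains edges that $\hat{U}$ has already deleted. Relative to $\hat{T}_t$ some true levels are therefore \emph{lower}, so a repair that only raises levels is not valid. The monotonicity you invoke holds for true distances over time, not for true versus same-time predicted distances. Nor is your work bounded by the theorem's measures, which are taken against $\hat{\texttt{L}}_i$ at the last correct step $i$, not against $\hat{\texttt{L}}_t$.

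Concretely, let $e_1=(u,v)$ with $\Level[0]{u}\ge\Level[0]{v}$, so deleting it changes nothing. Let $\hat{e}_1=(a,b)$ be the only edge entering $b$ from level $\Level[0]{b}-1$, chosen so that deleting it raises the levels of a vertex set of total degree $\Theta(m)$. Then $i=0$, $\eta_e=1$ and $\eta^*_v=0$, so the theorem demands $O(1)$ time. Yet your seed set (iii) contains that entire vertex set, and each of its vertices must be moved back \emph{down} relative to $\hat{T}_1$. For the same reason your telescoping claim is false, namely that true and predicted levels stay equal until a new misprediction: after divergence the predicted trajectory evolves independently of the real one.

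The missing idea is to stop following the prediction after divergence and anchor at $\hat{T}_i$, where $i$ is the last step with $\{e_1,\dots,e_i\}=\{\hat{e}_1,\dots,\hat{e}_i\}$. Then $\hat{G}_i=G_i$, and $G_j=\hat{G}_i\setminus\{e_{i+1},\dots,e_j\}$ is a pure batch deletion. Every level therefore only rises from $\hLevel[i]{v}$ to $\Level[j]{v}$, so the batch ES procedure is correct and each unit raise of $v$ costs $O(\dg{v})$. This sums to exactly $O(\eta_e+\eta^*_v)$ as defined, and falling back to recomputation gives the $\min\{m,\cdot\}$.

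That charging also needs the ES scan state at the anchor, which your (time, level, parent) history does not provide. The paper stores one cursor into $\In{v}$ per vertex per snapshot, rather than the $\UP{\cdot}$ lists, which would cost $O(m^2)$. This is $O(n)$ per step and $O(mn)$ in total.

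The total-time bound then follows from monotonicity of the true levels together with $\Level[i]{v}=\hLevel[i]{v}$. The one-step ES costs from the anchor onward telescope to $O\bigl(\eta_e+\sum_v\dg{v}\,(\Level[j]{v}-\hLevel[i]{v})\bigr)$.
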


\begin{remark}
 Due to a polynomially tight lower bound for partially dynamic BFS even for offline updates~\cite{HenzingerKNS15}, we cannot hope to polynomially improve our preprocessing time. 
\end{remark}

\begin{remark}
These are smooth degradation of the ES Trees~\cite{even-shiloach} requiring $O(m)$ time in the worst case, and total time 
 $O(\eta_e+\eta^*_v)=O(mn)$.
\end{remark}
\vspace{0.5em}


   

\begin{restatable}[Fully Dynamic BFS]{thm}{fdBFS}
    Given a graph $G(V,E)$ having $n$ vertices undergoing $m$ edge updates with a predicted sequence of updates, the BFS tree can be reported in worst case $O(\min\{m,\eta_e+\eta^*_v\})$ time using $O(m^2)$ preprocessing time and space.
\label{thm:fdBFS}
\end{restatable}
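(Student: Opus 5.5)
Preprocessing takes the predicted sequence of $m$ updates, $\hat u_1,\dots,\hat u_m$, and simulates it offline. This gives predicted graphs $\hat G_0,\hat G_1,\dots,\hat G_m$. For every time step $t$ we compute the BFS tree of $\hat G_t$ and store its levels $\hLevel[t]{v}$ and parents $\hPar[t]{v}$. Storing these arrays at all $m$ time steps would already use $O(mn)$ space. To reach the stated $O(m^2)$ preprocessing bound, I will instead recompute BFS from scratch at each of the $m$ steps, at $O(m)$ time each, for $O(m^2)$ total time. I will also store the edge set of $\hat G_t$ implicitly: keep the update sequence together with, for each edge, the time intervals in which it is present. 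This gives $O(1)$ membership queries of the form ``is $(u,v)\in\hat G_t$''.

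Online, when the real update $u_t$ arrives, the current graph $G_t$ differs from $\hat G_t$ only in the edges touched by mispredicted updates. I will maintain the symmetric difference $D_t=E(G_t)\triangle E(\hat G_t)$, whose size is $O(\eta_e)$. The central tool is the fully dynamic batch extension of ES trees mentioned in the abstract. Starting from the correct BFS tree of $\hat G_t$, we apply the batch $D_t$ of insertions and deletions. The batch ES procedure works in two phases. In the insertion phase, the endpoints of inserted edges whose level can decrease are pushed into a priority queue bucketed by level and relaxed in increasing level order, scanning the neighbors of any vertex whose level drops. In the deletion phase, vertices that lose their parent look for an alternative parent at the same level by scanning their adjacency list. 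If none is found, the level is raised by one, the vertex is reinserted, and its children are propagated; this is the standard ES level-raising mechanism, processed level by level.

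The cost analysis charges all work to error vertices. A vertex is processed only if its final level $\Level[t]{v}$ differs from $\hLevel[t]{v}$, or if it is an endpoint of an edge in $D_t$. The latter contributes $O(\eta_e)$. For an error vertex $v$, each level change costs one scan of its adjacency list, so it contributes $O(\deg(v)\cdot|\Level[t]{v}-\hLevel[t]{v}|)$. Summing over vertices gives $O(\eta_e+\eta^*_v)$. We also get $O(m)$ for free: if the work counter exceeds $cm$, we abort and run a plain BFS on $G_t$. The overall bound is therefore $O(\min\{m,\eta_e+\eta^*_v\})$. Because each step restarts from the stored prediction for time $t$, errors do not accumulate across steps; only the current $D_t$ and the current error vertices matter.

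The main obstacle is the mixed insertion/deletion batch. Insertions can lower levels while deletions raise them, so a naive ordering could move a vertex up and later down, with no bound by its net error. My plan is to first compute the deletion effects lazily and then process all candidate vertices in a single increasing-level sweep, tentatively assigning each vertex $\min$ over its neighbors' final levels plus one. This way each vertex's level moves monotonically toward its final value once it is touched. Then I must argue that the number of rescans of a vertex is at most its error plus one. A subtle point is that vertices whose level is correct but whose parent pointer became invalid must be fixed without rescanning their whole degree unnecessarily. I would handle this by charging such a vertex to the edge of $D_t$ or to the error-vertex neighbor that caused the invalidation. I expect this charging argument to be the hardest part to make rigorous.

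Finally, maintaining $D_t$ as $t$ advances requires, at each step, the real update $u_t$ and the predicted update $\hat u_t$. Each of these toggles at most one edge of $D_t$, so $D_t$ changes by $O(1)$ per step, and $\eta_e$ bounds its size.
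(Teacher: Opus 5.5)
Your proposal has two genuine gaps. The first is the choice of baseline snapshot. In the statement, $\eta_e=j-i$ and $\eta^*_v=\sum_v \dg{v}\,|\Level[j]{v}-\hLevel[i]{v}|$ are both measured against $\hat T_i$, where $i$ is the last step at which the real and predicted prefixes agree. The paper's algorithm accordingly loads $\hat T_i$, runs \textsc{batchDynamicUpdate} on the real batch $U_{i..j}$, and rolls back afterwards. You instead start from $\hat T_j$ and apply $D_j=E(G_j)\triangle E(\hat G_j)$. Your edge term is fine, since $|D_j|\le 2\eta_e$. Your vertex term, however, is $\sum_v \dg{v}\,|\Level[j]{v}-\hLevel[j]{v}|$, and this is only bounded by $\eta^*_v+\sum_v \dg{v}\,|\hLevel[j]{v}-\hLevel[i]{v}|$. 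The second summand is the level movement caused by the mispredicted updates $\hat e_{i+1},\dots,\hat e_j$, and it is unrelated to $\eta_e+\eta^*_v$. Concretely, let $\hat e_{i+1}$ delete an edge $(s,a)$ whose removal pushes a subgraph carrying $\Theta(m)$ edges to much larger levels, and let the real $e_{i+1}$ be an insertion that changes no level. Then $\eta_e=1$ and $\eta^*_v=0$, so the theorem demands $O(1)$ time. Your procedure, by contrast, must reinsert $(s,a)$ into $\hat T_{i+1}$ and lower the entire subgraph, spending $\Theta(m)$. Your scheme is essentially the paper's non-trivial error correction with $k=j$. For that variant the paper explicitly notes that the error relative to $\hat T_k$ may be larger than $\eta^*_v$, so it gives an incomparable bound rather than the stated one. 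The remedy is to restart from $\hat T_i$ and apply $U_{i..j}$.

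The second gap is the replacement-parent mechanism, which is precisely what the $O(m^2)$ budget pays for. Recomputing BFS per step accounts for the time; the space goes into storing, besides the level and parent arrays, the upper-parent lists $\hat{\texttt{UP}}_i[\cdot]$ of every snapshot, each of size $O(m)$. With these lists, a vertex $y$ that loses its parent edge, or whose parent $x$ is raised, either takes $\UP{y}[0]$ in $O(1)$ time or finds $\UP{y}=\emptyset$ and enters $\LLq{\Level{y}}$. Each child of a raised $x$ therefore costs $O(1)$ inside the scan of $\Out{x}$ and is charged to $\dg{x}$. Your adjacency-list scan admits no such charging. Take a vertex $p$ with $O(1)$ in-neighbours and $k$ children that rises by one level, and let each child $y$ have $d$ in-neighbours of which, after $p$ moves, only the last in its list is still at level $\Level{y}-1$. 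The children keep their levels and contribute nothing to $\eta^*_v$, yet you spend $\Theta(kd)$ while $\eta_e+\eta^*_v=O(k)$; with $k=d=\sqrt m$ this is $\Theta(m)$ against $O(\sqrt m)$. Finally, your treatment of mixed batches is left unproved: you assign each vertex the minimum over its neighbours' \emph{final} levels plus one and hope to bound rescans by error plus one. The paper handles this with a single increasing sweep over levels that processes $\LLq{\ell}$ before $\LRq{\ell}$ at each level. It maintains $\texttt{UP}$ under insertions, resetting it to $\{u\}$ on a level decrease and appending $u$ on equality. It also lets an insertion-lowered vertex at level $\ell$ directly adopt a vertex waiting in $\LLq{\ell+1}$.
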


\begin{remark}
It is a smooth degradation of the static BFS algorithm requiring $O(m)$ time. 
\end{remark}
\vspace{0.5em}
\newpage

\item
\textbf{Space optimizations for partially dynamic algorithms.}
Since our algorithms require space significantly larger than $O(m)$ of ES Trees~\cite{even-shiloach}, we also optimize the space utilization of our partially dynamic algorithm as follows.
\vspace{0.5em}

\begin{restatable}[Space Optimization for Incremental BFS]{thm}{spaceOpt}
 Given a graph $G(V,E)$ having $n$ vertices and $m$ edges undergoing edge insertions with a predicted sequence of updates, the BFS tree can be reported in worst case $O(\eta_e+\eta_v+n\log n)$ time using $O(mn)$ preprocessing time and $O(n^2)$ space, with total update time $O(\eta^*_v+\eta_e+n\log n)$. 
\label{thm:spaceOpt}
\end{restatable}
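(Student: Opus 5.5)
The plan is to start from the algorithm behind Theorem~\ref{thm:incBFS} and change only how the preprocessed information is stored. In that algorithm, preprocessing simulates the predicted insertion sequence with an ES tree. For every predicted time step it stores the predicted level $\hLevel{v}$ and parent $\hPar{v}$ of each vertex. It also stores, for every vertex, the adjacency information needed to repair levels locally; this is the $O(mn)$ space. The online phase compares the actual update with the predicted one. It then runs a batch ES-tree repair starting from the vertices whose actual level differs from the predicted one, and this repair costs $O(\eta_v+\eta_e)$ per update.

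In the incremental setting each vertex's level is non-increasing over time and takes at most $n$ distinct values. So the predicted level history of a vertex $v$ is a sequence of at most $n$ change points $(t_i,\ell_i)$, and the parent history changes only at these or at $O(1)$ amortized further points. Storing all these sequences takes $O(n^2)$ space, while preprocessing is still a single $O(mn)$ ES-tree simulation. What we give up is the per-edge, per-time information, for example which in-neighbours sit at which level at time $t$.

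The online phase then proceeds as follows. At step $t$ we first need the predicted BFS tree at time $t$, meaning $\hLevel{v}$ for all $v$. We maintain a pointer into each vertex's change list and advance it when $t$ passes a change point. To avoid touching all $n$ vertices at every step, we keep a priority queue, or a bucket structure keyed by next change time, holding the vertices whose next change is due. Each pop costs $O(\log n)$, which over the at most $n$ vertices changing in one step gives the additive $O(n\log n)$ worst-case term. Because the total number of change points is only $O(n^2)$ and the per-vertex cost of each change is charged once, I expect the total-time charge to come out as $O(n\log n)$ overall, not per step. Heaps are rebuilt lazily.

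Next, the repair. For error vertices we cannot rely on stored per-time neighbour lists. Instead, we scan the current adjacency of an error vertex, costing its degree, and read each neighbour's predicted level through its pointer in $O(1)$. This reproduces the $O(\eta_v+\eta_e)$ batch ES-tree repair of Theorem~\ref{thm:incBFS}. Weighting each error vertex's scan by its error gives the total bound $O(\eta^*_v+\eta_e)$, since in the incremental setting each level decreases monotonically.

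The main obstacle I expect is correctness of the non-error vertices. A vertex whose predicted level is correct may still have a predicted parent that is wrong in the actual graph, because the parent edge was mispredicted or the parent is itself an error vertex. It then needs a fresh parent, and the $O(mn)$-space version handled this with stored per-time edge lists. My plan here is to charge such a reassignment either to the mispredicted edge ($\eta_e$) or to the error parent, whose adjacency scan already visits that child. During that scan we can assign the child a valid parent at level $\ell-1$, so no extra space is needed. Making this charging argument rigorous, together with the heap bookkeeping so that it stays within $O(n\log n)$ worst case, is the delicate step.
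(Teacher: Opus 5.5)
Your storage scheme matches the paper's. Each vertex gets a list $\mathrm{data}[v]$ of $(p,d,t)$ tuples of length at most $n-1$, because in the ES tree a parent changes only together with a strict level decrease; your ``$O(1)$ amortized further points'' hedge is therefore unnecessary. The gap is in how you read the lists. Eager pointer advancement does give the per-step $O(n\log n)$ worst case. Your total-time claim, however, does not follow: charging each of up to $\Theta(n^2)$ change points once costs $\Theta(n^2)$, or $\Theta(n^2\log n)$ with a heap, not $O(n\log n)$. This happens even with perfect predictions, where $\eta_e=\eta^*_v=0$. On the path $s=v_0\to v_1\to\cdots\to v_{n-1}$, inserting $(s,v_2),(s,v_3),\dots$ in turn creates $\sum_k(n-k)=\Theta(n^2)$ tuples, all of which your pointers must pass. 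The paper itself notes in Section~\ref{sec:error-correction} that monotone pointer advancement costs $O(n^2)$ in total, so your scheme misses the stated total bound $O(\eta^*_v+\eta_e+n\log n)$.

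The paper reads the lists lazily instead. While the prefixes agree, nothing is materialised, and a lookup of $v$ at step $j$ is a binary search in $\mathrm{data}[v]$ for the last tuple with $t\le j$. Only when a repair is needed is $\hat T_i$ rebuilt by $n$ binary searches in $O(n\log n)$. After that, \textsc{batchInsertEdge} runs exactly as in Theorem~\ref{thm:incBFS}. Also make sure the repair starts from the snapshot at the last agreement step $i$, not from $\hat T_t$. Predicted insertions that never happened can make predicted levels too small, and a decrease-only repair cannot undo that.

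The ``main obstacle'' you single out does not arise, so the charging argument you leave open is unnecessary. The $O(mn)$ space of Theorem~\ref{thm:incBFS} comes only from the $m+1$ arrays $\hLevel[i]{\cdot},\hPar[i]{\cdot}$. \textsc{batchInsertEdge} never uses per-time adjacency, only the current $\Out{\cdot}$. Moreover, a vertex $v$ whose level is correctly predicted always keeps a valid parent $p=\hPar[i]{v}$. Since $\hat T_i$ is a BFS tree of $\hat G_i=G_i\subseteq G_j$, we have $(p,v)\in E_j$ and $\Level[j]{p}\le\hLevel[i]{p}=\hLevel[i]{v}-1=\Level[j]{v}-1\le\Level[j]{p}$. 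So $p$ is itself a non-error vertex exactly one level above $v$, and neither a mispredicted parent edge nor an error parent can occur. Hence, once $\hat T_i$ is reconstructed, Lemma~\ref{lem:incremental-correctness} and the $O(\eta_e+\eta_v)$ analysis apply verbatim. The only new cost is the $O(n\log n)$ snapshot reconstruction.
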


\begin{restatable}[Space Optimization for Decremental BFS]{thm}{spaceOptDec}
 Given a graph $G(V,E)$ having $n$ vertices and $m$ edges undergoing edge deletions with a predicted sequence of updates, the BFS tree can be reported in worst case $O(\eta_e+\eta_v^\star+nk)$ time using $O(mn)$ preprocessing time and $O\!\left(\frac{mn}{k}+n^2\right)$ space, for any $1\leq k\leq n$, with total update time  $O(\eta^*_v+\eta_e +nk)$. 
\label{thm:spaceOptDec}
\end{restatable}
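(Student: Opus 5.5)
We prove the bound in three steps: first the preprocessing that gives the $O(mn)$-space algorithm of Theorem~\ref{thm:decBFS}, then a sparsification that cuts the space to $O(\frac{mn}{k}+n^2)$, and finally the update procedure that pays only an extra $O(nk)$ per update.

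\textbf{Preprocessing.} We simulate the predicted deletion sequence offline with an ES tree. This produces, for every predicted time $t$, the predicted level $\hLevel{v}$ and predicted parent $\hPar{v}$ of every vertex. Storing these level and parent arrays explicitly, one per distinct level vector, costs $O(n^2)$ space in total. This holds because the ES tree changes each vertex's level at most $n$ times, so all level changes can be stored as per-vertex change lists of total size $O(n^2)$.

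\textbf{Where the $O(mn)$ space comes from, and how to cut it.} In the original algorithm the extra space is the per-vertex scanning pointer into the adjacency list at each time. That pointer is what lets a vertex whose predicted level is correct resume its parent search without rescanning its edges. Since the level of $v$ changes at most $n$ times and each change may advance the pointer by up to $\deg(v)$, the pointer history totals $O(mn)$. We instead checkpoint pointer states only every $k$-th level change of each vertex. This stores $O(\frac{n}{k})$ snapshots per vertex, each of size $O(\deg(v))$, for $O(\frac{mn}{k})$ in total. Together with the $O(n^2)$ level and parent history, this gives the claimed space bound.

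\textbf{Online update.} The update algorithm runs exactly as in Theorem~\ref{thm:decBFS}. We compare the actual update to the predicted one, apply the batch ES-tree extension to correct error vertices, and charge that work to $\eta_e+\eta^*_v$. The only new cost comes from non-error vertices. When we need a vertex's current pointer state, we jump to the nearest stored checkpoint and replay at most $k$ intermediate level changes. Each replay step only re-validates a candidate parent via the stored level arrays, so it costs $O(1)$ amortised, provided the scan restarts at the checkpointed position. This gives at most $O(k)$ extra work per vertex and $O(nk)$ per update.

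\textbf{Main obstacle.} The hardest step is arguing that replaying $k$ level changes costs $O(k)$ rather than $O(k\deg(v))$. If that bound fails, the per-update cost becomes $O(\min\{m,\cdot\})$ instead. The plan is to exploit the monotonicity of ES-tree pointers under deletions: levels only increase, and the pointer only moves forward within a level. Replay is then a forward scan from the checkpoint, and we charge the scanned edges to the level increments already paid for in $\eta^*_v$, or for non-error vertices to the skipped checkpoint interval. For the total-time bound, note that each vertex triggers a replay at most once per checkpoint interval. Summing over vertices gives $O(nk)$ in addition to $O(\eta^*_v+\eta_e)$.
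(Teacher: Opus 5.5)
There is a genuine gap, at exactly the step you flag as the main obstacle: your checkpoints are at the wrong granularity. In the pointer-based decremental ES tree, the cursor of $v$ into $\In{v}$ (equivalently, its parent) can advance up to $\dg{v}$ positions while $\Level{v}$ stays fixed, and it resets when the level increases. Checkpointing only at every $k$-th level change therefore says nothing about where the cursor is within the current level at step $i$. At best, the nearest checkpoint records the cursor at the moment $v$ entered its current level.

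Recovering the cursor then means scanning $\In{v}$ from the start of the current level, which costs up to $\dg{v}$. If you replay the $k$ intermediate levels as you propose, it costs up to $k\cdot\dg{v}$. Summed over the affected vertices this can reach $\Theta(m)$ or $\Theta(km)$, not $O(nk)$.

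The charging you suggest does not rescue this. The replay cost comes from the predicted history \emph{before} step $i$, whereas $\eta^*_v$ only measures $|\Level[j]{v}-\hLevel[i]{v}|$. A vertex with zero error whose parent edge is deleted in the real batch still needs its exact cursor at step $i$, and nothing in $\eta_e+\eta^*_v$ pays for it.

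Relatedly, your claim that the parent history fits in $O(n^2)$ is false. Parent changes within a level are not level changes, and they can number $\Theta(mn)$ in total; this is precisely the quantity that has to be sparsified. Also, a cursor state is $O(1)$ words, so the ``snapshots of size $O(\dg{v})$'' that produce your $mn/k$ term are unexplained.

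The missing idea is to checkpoint by cursor movement rather than by level changes, which is what the paper does. For each $v$, store a constant-size tuple (parent, level, time) whenever its cursor has advanced at least $k$ positions in $\In{v}$ since the last stored tuple, and also whenever its level changes. Per vertex and per level this gives at most $\lceil \dg{v}/k\rceil\le \dg{v}/k+1$ tuples. Over at most $n$ levels the total is $O\bigl(\sum_v n(\dg{v}/k+1)\bigr)=O(mn/k+n^2)$.

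With this scheme, the latest stored tuple before step $i$ is at the same level and at most $k-1$ cursor positions behind the true one. Restoring each vertex's state therefore costs $O(k)$, which gives the additive $O(nk)$ in both the worst-case and total bounds.

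Restarting the scan from an earlier position at the same level is safe. Under deletions, an in-neighbour that stops being a valid parent at level $\Level{v}-1$ (its edge was deleted or its level rose) never becomes valid again while $v$ stays at that level.
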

\vspace{0.5em}

\begin{remark}
Thus, all our algorithms satisfy the desired properties of prediction algorithms, namely, consistency, robustness, and smooth degradation. 
\end{remark}
\vspace{0.5em}

\item
\textbf{Error correction mechanisms.}
Since our algorithms are strongly dependent on the error measures $\eta_e,\eta_v$ and $\eta^*_v$, we present 
error correction measures to further improve the performance of our results. These address the limitations of the stringent definitions of the error measures, arising even from minor differences among the update sequences.
\vspace{0.5em}

\begin{restatable}[Trivial Error Correction]{thm}{trivErr}
 All our algorithms can perform a trivial error correction if $\hat{U}_{i..j}$ is a permutation of $U_{i..j}$, using $O(1)$ overhead cost per update.
\label{thm:trivErr}
\end{restatable}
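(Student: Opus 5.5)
The idea is that the algorithms precompute, for every prefix $\hat{U}_{1..i}$ of the predicted sequence, the BFS tree of the predicted graph $\hat{G}_i$. Levels and parents are stored as $\hLevel{\cdot}$ and $\hPar{\cdot}$. The graph after a prefix depends only on the \emph{set} of updates applied, not their order. This holds as long as the prefix contains no insertion and deletion of the same edge, which is automatic in the partially dynamic settings, and in the fully dynamic setting is assumed of the window. So if the actual updates $U_{i..j}$ are a permutation of $\hat{U}_{i..j}$ and the prefix up to $i-1$ agrees, then $G_j=\hat{G}_j$. The precomputed BFS tree for index $j$ is then exactly correct, and no error should be charged for the reordering. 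The proof therefore reduces to two tasks: detecting such a window online, and switching to the precomputed structure in $O(1)$ amortized time per update.

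\textbf{Step 1: detecting the window.} During preprocessing I would assign each predicted update $\hat{U}_t$ an identifier and store a hash map from edge (with its insert/delete flag) to position $t$. Online, I maintain a pointer $i$ to the last index at which the actual and predicted graphs coincided. I also maintain a counter of \emph{pending} updates: those seen among $U_{i+1..}$ whose predicted position lies in the current window, together with the largest predicted position $p$ seen so far. When actual update $U_j$ arrives, I look up its predicted position in $O(1)$. If it lies beyond $i$, I mark it and update $p$ and the counter. Whenever the number of marked updates equals $j-i$ and $p=j$, the actual updates $U_{i+1..j}$ form exactly the multiset $\hat{U}_{i+1..j}$. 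Every step of this is $O(1)$ per update.

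\textbf{Step 2: switching structures.} At such a $j$, the algorithm declares $G_j=\hat{G}_j$. It reverts its reported answer to the precomputed tree for index $j$ by moving a version pointer, and sets $i\leftarrow j$. In between, the underlying algorithm (Theorems~\ref{thm:incBFS}, \ref{thm:decBFS}, \ref{thm:fdBFS}) runs as usual, treating out-of-order updates as mispredicted. The corrected reset only discards work, so correctness is inherited. The extra cost is $O(1)$ per update for the hashing and counters, plus the switch. The switch is $O(1)$ if the algorithms report through a versioned pointer to precomputed arrays; otherwise, I would charge undoing the differences between the current tree and the precomputed one to the work already spent creating them.

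\textbf{Main obstacle.} The main obstacle is making the switch genuinely $O(1)$ overhead rather than $O(\text{number of changed entries})$. My first attempt would be to keep the maintained BFS as a persistent/versioned structure, in which the online algorithm stores only a diff log on top of the precomputed version $\hat{G}_i$. Resetting then means discarding the log and pointing at version $j$. Discarding can be done lazily by a timestamp check on each access, so the cost is amortized against the diff writes already paid for.
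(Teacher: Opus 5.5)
Your proposal is correct and follows essentially the same route as the paper: look up each real update's position in $\hat{U}$, keep the largest matched index along with whether every update of the window has been matched, and when that index equals $j$ conclude $G_j=\hat{G}_j$ and reset $i\leftarrow j$, at $O(1)$ cost per update. Two of your precautions go further than needed. First, for valid sequences starting from the same graph, the updates on any single edge must alternate in type, so $G_j=\hat{G}_j$ follows from multiset equality alone and your fully dynamic caveat is unnecessary. Second, the versioned/persistent machinery is not needed, because the paper's online phase already rolls back to the snapshot $\hat{T}_i$ after every query, so resetting $i$ is itself the $O(1)$ switch.
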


\begin{remark}
     This resets the error to zero when the update sequence is a simple permutation of the predicted sequence, since the order of updates does not matter henceforth.
\end{remark}
\vspace{0.5em}

\begin{restatable}[Non-Trivial Error Correction]{thm}{nonTrivErr}
All our algorithms can always perform non-trivial error correction. 
The incremental and decremental algorithms identify the longest prefix of $U_{i..j}$ matched in $\hat{U}_{i..j}$, while the fully dynamic algorithm identifies the optimal point of $U_{i..j}$ matched with $\hat{U}_{i..j}$ minimizing the number of required edge updates, using $O(\eta_e)$ overhead cost per update.
\label{thm:nonTrivErr}
\end{restatable}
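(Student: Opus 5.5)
We treat this as a statement about the \emph{matching} component that sits on top of every algorithm of Theorems~\ref{thm:incBFS}--\ref{thm:fdBFS}. Each of these algorithms keeps a pointer $i$ into the predicted sequence $\hat U$. Up to the current time $j$ the algorithm has precomputed and stored the BFS tree (levels $\hat{\texttt{L}}$ and parents $\hat{\texttt{Par}}$) of the graph $\hat G_t$ obtained after applying the prefix $\hat U_{1..t}$. The real graph $G_j$ differs from $\hat G_i$ by a set of mismatched updates, which are counted by $\eta_e$. The error-correction step asks for the best index $t$ to realign to. In the partially dynamic case, only prefixes matter: since the updates are monotone, $\hat G_t$ is a subgraph of $G_j$ (incremental) or a supergraph of it (decremental) exactly when $\hat U_{i..t}\subseteq U_{i..j}$. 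The plan is therefore to maintain, incrementally and per update, the quantity $t^\ast=\max\{t:\hat U_{i..t}\subseteq U_{i..j}\}$. After realigning to $t^\ast$, we run the batch ES-tree procedure on the leftover updates $U_{i..j}\setminus\hat U_{i..t^\ast}$, which by definition number at most $\eta_e$.

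\textbf{Partially dynamic case.} We store the position $\mathrm{pos}(e)$ of each predicted update $e$ in $\hat U$ in a hash table, built once during preprocessing. We also keep a boolean array $\mathrm{seen}[\cdot]$ over positions, together with the set $R$ of real updates not yet matched in the current prefix. When a real update $e$ arrives, we set $\mathrm{seen}[\mathrm{pos}(e)]$, or add $e$ to $R$ if $e\notin \hat U$. We then advance $t^\ast$ while $\mathrm{seen}[t^\ast+1]$ holds, and remove the newly covered edges from $R$. Each advance is charged to a distinct real update, so this costs $O(1)$ amortized. The extra work per query is spent on $R$, whose size is exactly the current edge error, giving the claimed $O(\eta_e)$ overhead. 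Correctness follows because the stored tree of $\hat G_{t^\ast}$ together with the batch update on $R$ is exactly the partially dynamic algorithm started from a valid intermediate state. The main point to verify is that the error measure $\eta_v$ (and $\eta_v^\ast$) is now taken relative to $\hat G_{t^\ast}$, which can only shrink it compared with the uncorrected pointer.

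\textbf{Fully dynamic case, the main obstacle.} Insertions and deletions may cancel, so the prefix property fails. Instead we want $t$ minimizing the symmetric difference $|E(G_j)\triangle E(\hat G_t)|$, which is the number of edge updates needed to reach $G_j$ from $\hat G_t$. I would maintain the counters $d_t=|E(G_j)\triangle E(\hat G_t)|$ only for the candidate indices $t$ near the current alignment. The key observation is that a real update on edge $e$ changes $d_t$ by $\pm1$, and the sign flips only at the positions in $\hat U$ where $e$ is toggled. So we store, for every edge, the sorted list of its toggle positions in $\hat U$; this is where the $O(m^2)$ preprocessing space of Theorem~\ref{thm:fdBFS} suffices. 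The candidates worth checking are the positions where some edge of the current difference set toggles, and there are $O(\eta_e)$ such edges. We evaluate $d_t$ at these breakpoints by a sweep that keeps a running sum and find the minimum, for $O(\eta_e)$ per update. The delicate part is proving that the optimum is always attained at such a breakpoint within a window of length $O(\eta_e)$ around the previous alignment. If this fails, I would fall back to charging the sweep against the $O(\min\{m,\eta_e+\eta_v^\ast\})$ update cost, which preserves the stated bounds.
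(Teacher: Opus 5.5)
Your partially dynamic argument is correct and follows the paper's approach. You realign to the longest prefix $\hat U_{i..t^\ast}$ of the prediction that is contained in $U_{i..j}$, then batch-update the unmatched set $R=U_{i..j}\setminus\hat U_{i..t^\ast}$, which is exactly the paper's set $A$. Your seen-array pointer is a concrete implementation of the paper's bare claim that this prefix can be found in $O(\eta_e)$ time. Your monotonicity argument for $\tilde\eta_v\le\eta_v$ and $\tilde\eta^*_v\le\eta^*_v$ matches the paper's remark.

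The gap is in the fully dynamic case. You minimize $d_t=|E(G_j)\triangle E(\hat G_t)|$ over essentially unrestricted $t$. Your $O(\eta_e)$ bound then rests on the unproven claim that the optimum lies at a breakpoint within an $O(\eta_e)$ window around the previous alignment. That claim is false. Take $\hat U=(+a,+b_1,-b_1,\dots,+b_r,-b_r,+c)$ and real updates $(+a,+c)$: then $i=1$, $j=2$ and $\eta_e=1$, yet $d_t=0$ only at $t=2r+2$, arbitrarily far from $i$. Your fallback does not rescue the bound. A sweep over all toggle positions of the relevant edges can cost $\Theta(m)$, and even when it can be charged, it gives overhead $O(\min\{m,\eta_e+\eta^*_v\})$ rather than the claimed $O(\eta_e)$.

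The missing observation is that the statement only asks for the best point matched within $\hat U_{i..j}$, i.e.\ $t\in[i,j]$. So the window has length exactly $j-i=\eta_e$ by definition, and the paper's proof is a plain sweep over it. Concretely:
\begin{enumerate}
\item Let $Y$ be the set of edges toggled an odd number of times in $U_{i+1..j}$; it has $|Y|\le\eta_e$ and is kept in a hash set.
\item Start from $D=Y$ and $d_i=|Y|$.
\item For $t=i+1,\dots,j$, set $D\leftarrow D\triangle\{\hat e_t\}$. Then $d_t=d_{t-1}-1$ if $\hat e_t\in D$ beforehand and $d_t=d_{t-1}+1$ otherwise; record the minimizer.
\end{enumerate}
This costs $O(\eta_e)$ per update and needs no per-edge toggle lists. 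Those lists would in any case take only $O(m)$ total space, so they are unrelated to the $O(m^2)$ preprocessing bound you invoke.

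Your symmetric-difference objective is a mild refinement of the paper's $|U_{i..j}|+|\hat U_{i..k}|-2|M|$, because it lets cancelling updates offset each other. The same sweep computes either objective.
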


\begin{remark}
     This reduces the error to the minimum value, which the partially dynamic or fully dynamic algorithm can still use to update the BFS tree. 
\end{remark}

\end{enumerate}



At its core, our algorithms can also be used as a batch update version of ES trees~\cite{even-shiloach} for incremental, decremental, and fully dynamic settings, which may be of independent interest. Notably, this simple extension to batch updates is sufficient to give a prediction algorithm satisfying \textit{robustness, consistency and smooth degradation}. Thus, our framework can potentially be generalized for other dynamic graph problems by considering their corresponding batch-update version.

\subparagraph*{Outline.}
We describe the notations and definitions used in the paper in~\Cref{sec:preliminaries}, along with a brief description of the state of the art algorithms. \Cref{sec:incremental} presents our incremental algorithm and 
\Cref{sec:decremental} presents our decremental algorithm.
\Cref{sec:fully-dynamic} combines these into a fully dynamic framework.
\Cref{sec:storage} develops the space-efficient incremental BFS algorithm.
\Cref{sec:storage-decremental-fullyDynamic} develops the space-efficient storage approach for both decremental-only and fully dynamic BFS algorithms.
The approach to error correction, improving the performance of our algorithms, is presented in \Cref{sec:error-correction}.
\Cref{sec:conclusion} concludes with directions for future work.


\section{Preliminaries}
\label{sec:preliminaries}


Let $G_0=(V,E_0)$ denote the initial directed graph with vertex set $V$ and
edge set $E_0\subseteq V\times V$, where $n=|V|$, $m_0=|E_0|$ and a source $s\in V$. The graph
is subject to an online sequence of edge updates
$U=\{\langle e_1,t_1\rangle,\langle e_2,t_2 \rangle,\dots,\langle e_m,t_m\rangle\}$, where each $e_i$ is an edge which is inserted ($t_i=+$)
or deleted ($t_i=-$). 
We denote by $G_i=(V,E_i)$ the graph obtained
after applying updates $\{\langle e_1,t_1\rangle,\cdots,\langle e_i,t_i \rangle\}$ to $G_0$.
We denote a BFS tree of graph $G_i$ from $s$ as $T_i$. 
For each $v\in V$, we define the following: 
\begin{itemize}
    \item $\texttt{In}_i[v]=\{u\in V:(u,v)\in E_i\}$, the set of in-neighbours of $v$ in $G_i$,
    \item $\texttt{Out}_i[v]=\{w\in V:(v,w)\in E_i\}$, the set of out-neighbours of $v$  in $G_i$,
    \item $\dg[i]{v}=|\texttt{In}_i[v]|+|\texttt{Out}_i[v]|$, the degree of $v$  in $G_i$.
    \item $\Level[i]{v}$: the distance from $s$ to $v$ in $T_i$,
    \item $\Par[i]{v}$: the parent of $v$ in $T_i$, with $\Par[i]{s}=\textsc{null}$.
     \item $\texttt{UP}_i[v]=\{u\in \texttt{In}_i[v]: \Level[i]{u}=\Level[i]{v}-1\}$, the \textit{upper parents}, or $\texttt{In}_i[v]$ at $\Level[i]{v}-1$ in $T_i$.
\end{itemize}
When the index $i$ corresponding to $G_i$ is clear from context, we drop the subscript $i$.
This set of upper parents, is only used by the decremental and fully dynamic algorithms to efficiently locate a replacement parent when the current parent edge is deleted.



\subsection{Prediction Model and Error Measures}

Alongside the real update sequence $U$, we have access to a \emph{predicted} update sequence $\hat{U}=\{\langle\hat{e}_1,\hat{t}_1\rangle,\langle\hat{e}_2,\hat{t}_2\rangle,\dots,\langle\hat{e}_m,\hat{t}_m\rangle\}$ provided by an external oracle (e.g., a learned model), which is available in full before any real update arrives. Thus, it can be preprocessed offline to improve the processing of real updates. Let $\hat{G}_i=(V,\hat{E}_i)$ denote the graph obtained by
applying $\{\langle\hat{e}_1,\hat{t}_1\rangle,\cdots ,\langle\hat{e}_i,\hat{t}_i\rangle\}$ to $G_0$, and let $\hat{T}_i$ be a BFS
tree of $\hat{G}_i$ rooted at $s$. We store the corresponding arrays
$\hLevel[i]{\cdot}$ and $\hPar[i]{\cdot}$ for all $\hat{G}_i$. For the decremental and fully dynamic algorithms, we also store $\hat{\texttt{UP}}_i[\cdot]$ for each $\hat{G}_i$. The predicted sequence $\hat{U}$ need not agree with $U$ due to inaccurate predictions. 
Let the maximum exact prefix match of $U$ and $\hat{U}$ be of $i$ edges, i.e., $\hat{e}_k=e_k,\hat{t}_k=t_k$ for $k\leq i$ and at least $\hat{e}_{k+1}\neq e_{k+1}$ or $\hat{t}_{k+1}\neq t_{k+1}$. 
We thus define the following error measures for the current set $U=\{\langle e_1,t_1\rangle,\cdots,\langle e_j,t_j\rangle\}$  of $j \geq i$ updates.

\begin{definition}[Edge Prediction Error $\eta_e$]
\label{def:eta-e}
The number of real edge updates that were not correctly predicted starting from the first error $\eta_e = j - i$
\end{definition}
\begin{remark}
This simple error measure is particularly significant for the partially dynamic updates, where a single incorrect update makes the entire remaining sequence unusable. 
\end{remark}
\begin{remark}
The simple $\eta_e$ measure is a coarse upper bound on the length of divergent suffix.   Error correction (Section~\ref{sec:error-correction}) reduces from $\eta_e$ the prefix where only order of updates differs.
\end{remark}

\begin{definition}[Vertex Prediction Error $\eta_v$]
\label{def:eta-v}
The sum of degrees $\dg{v}$ of vertices whose predicted level $\hat{\texttt{L}}_i[v]$ at last correct step
$i$ differs from the actual level $\texttt{L}_j[v]$ at step $j$
$$\eta_v \;=\; \sum_{\substack{v \in V :\hLevel[i]{v} \neq \Level[j]{v}}} \dg v$$
\end{definition}
\begin{remark}
Our erroneous prediction considers distance (instead of parent, etc.), making the error measure purely dependent on the graph and oblivious to the algorithm. 
\end{remark}

\begin{remark}
The degree of erroneously predicted vertex impacts the repair by affecting more incident edges, making $\eta_v$ a cost-sensitive error measure instead of a vertex-count error.
\end{remark}

\begin{definition}[Weighted Vertex Prediction Error $\eta^*_v$]
\label{def:eta-star}
The weighted sum of degrees $\dg{v}$ of vertices, with weight equivalent to error in prediction of level $|\hat{\texttt{L}}_i[v]-\texttt{L}_j[v]|$ from the last correct step $i$ to the current step $j$.

$$\eta^*_v \;=\; \sum_{v \in V} \dg v\,\bigl|\Level[j]{v} - \hLevel[i]{v}\bigr|$$
\end{definition}
\begin{remark}
The size of the distance error as a weight for erroneously predicted vertices captures the additional impact of the error,  making it a suitable estimate for the update cost.
\end{remark}

\paragraph*{Why we use offline preprocessing.}
The predicted update sequence is available before the real updates arrive, so we can preprocess it offline and build the corresponding snapshots and auxiliary data structures in advance. This does not change the fact that the problem is online; rather, it shifts work away from the critical update path and allows the algorithm to exploit accurate predictions whenever they are available. Thus, even if the preprocessing cost is large, the offline phase is still justified because it reduces online latency and preserves correctness for every prediction quality.

\subsection{State-of-the-art}
\label{sec:classical}

We briefly summarize the classical ES-tree approach~\cite{even-shiloach}, which
our algorithms extend. 
For completeness, we present the pseudocode for the classical single-edge
incremental and decremental BFS algorithms~\cite{even-shiloach} matching
our notation. These serve as a direct point of comparison for our batch
procedures \textsc{batchInsertEdge} and \textsc{batchDeleteEdge} presented in
Sections~\ref{sec:incremental} and~\ref{sec:decremental} respectively.

\subparagraph*{Incremental BFS.}
When an edge $(u,v)$ is inserted, if $\Level{u}+1<\Level{v}$, then the new
edge offers a shorter path to $v$. The algorithm updates $\Par{v}\gets u$ and
$\Level{v}\gets\Level{u}+1$, then propagates potential level decreases to
the out-neigbours of $v$ using a level-by-level queue, processing vertices in
non-decreasing order of their new levels. Hence, whenever the level of a vertex $v$ changes (decreases), we require $O(deg(v))$ time to process its neighbours. Since the level of a vertex can change $O(n)$ times throughout the algorithm, the total time required is $\sum_{v\in V}\dg{v}\times O(n)=O(mn)$.

\begin{algorithm}[htbp]
\caption{InsertEdge$(e_i=(u,v))$}
\label{alg:classicalInsertEdge}
\begin{multicols}{2}
    add $v$ to $\Out{u}$\;

    \BlankLine
    \lIf{$\Level{v} \le \Level{u}+1$}{
        \Return
    }
    \BlankLine
    Initialize empty lists $\LRq{\ell}$ for all levels $\ell$\;
    $\Par{v} \leftarrow u$\;
    $\Level{v} \leftarrow \Level{u}+1$\;
    add $v$ to $\LRq{\Level{v}}$\;

    \For{$l \leftarrow \Level{v}$ \KwTo $n$}{
        \While{$\LRq{l}$ not empty}{
            pop $y$ from $\LRq{l}$\;
            \ForEach{$z \in \Out{y}$}{
                \lIf{$\Level{z} \leq \Level{y}+1$}{\textbf{continue}}
                
                    $\Level{z} \leftarrow \Level{y}+1$\;
                    $\Par{z} \leftarrow y$\;
                    add $z$ to $\LRq{\Level{z}}$\;
            }
        }
    }
\end{multicols}
\end{algorithm}

\subparagraph*{Decremental BFS.}
When an edge $(u,v)$ is deleted, if $(u,v)$ was $v$'s parent edge (i.e.,
$u=\Par{v}$), then $v$ must find a replacement parent. The algorithm searches
$\texttt{Up}[v]$ for an incoming neighbour at level $\Level{v}-1$; if one exists it becomes the
new parent of $v$ and no further change is needed. Otherwise, $v$'s level
must increase, which may in turn invalidate the parent edges of $v$'s
children, triggering a recursive repair. Whenever the level of a vertex $v$ changes (increases), it updates $\texttt{Up}[v]$ and $\texttt{Up}[w]$ for each $(v,w)\in E$, requiring $O(\dg{v})$ time. Vertices are processed level-by-level
in non-decreasing order, ensuring that by the time a vertex is examined, all
vertices at smaller levels have already been finalized. Again, as the level of a vertex can change $O(n)$ times throughout the algorithm, the total time required is $\sum_{v\in V}\dg{v}\times O(n)=O(mn)$.\\

\begin{remark}
    We have described the version of the decremental algorithm that matches our fully dynamic batch updates using $UP[\cdot]$ lists. A similar algorithm using pointer access to the adjacency list instead is possible (see \Cref{sec:pointerDec} for details).
\end{remark}


\begin{algorithm}[htbp]
\caption{DeleteEdge$(e_i=(u,v))$}
\label{alg:classicalDeleteEdge}
\begin{multicols}{2}
    remove $v$ from $\Out{u}$\;
    remove $u$ from $\In{v}$\;

    Initialize empty lists $\LLq{\ell}$ for all levels $\ell$\;

    \If{$u \in \UP{v}$}{
        remove $u$ from $\UP{v}$\;

        \uIf{$\UP{v} = \emptyset$}{
            add $v$ to $\LLq{\Level{v}}$\;
        }
        \ElseIf{$\Par{v} = u$}{
            $\Par{v} \leftarrow$ $\UP{v}[0]$\;
            \Return\;
        }
    }
 \columnbreak
 
    \For{$\ell \leftarrow \Level{v}$ \KwTo $n$}{

    \While{$\LLq{\ell}$ not empty}{
        pop $x$ from $\LLq{\ell}$\;
        \lIf{$\UP{x}\neq \emptyset$}{
            $\Par{x} \leftarrow$ $\UP{x}[0]$
        }\Else{

        $\Level{x} \leftarrow \Level{x}+1$\;
        add $x$ to $\LLq{\Level{x}}$\;
        $\UP{x}\gets \emptyset$\;
        \ForEach{$p\in \In{x}$}{
        \If{$\Level{p}=\Level{x}-1$}{
        add $p$ to $\UP{x}$\;
        }
        }
        \ForEach{$y \in \Out{x}$}{
            \If{$x \in \UP{y}$}{
            remove $x$ from $\UP{y}$\;
            \uIf{$\UP{y}=\emptyset$}{
                add $y$ to $\LLq{\Level{y}}$\;
            }
            \ElseIf{$\Par{y}=x$}{
                $\Par{y} \leftarrow$ $\UP{y}[0]$\;
            }
            }
        }
        }
    }
    }

\end{multicols}
\end{algorithm}

\section{Incremental BFS with Predictions}
\label{sec:incremental}

We consider the incremental setting where $G_0(V,E_0)$ undergoes a sequence of edge insertions. 
We are given a predicted insertion sequence
$\hat{U}$ and a real insertion
sequence $U$, revealed online one edge at a time where all the $\hat{t}'s$ are $+$.
At step $j$, edge $e_j$ is inserted into the graph, producing $G_j=(V,E_j)$.

\subsection{Algorithmic Framework}

The algorithm proceeds in two phases: an offline preprocessing phase over the
predicted sequence, and an online phase that handles real updates. See \Cref{fig:inc-example} for an example. 
The full procedure is given in Algorithm~\ref{alg:batchInsertEdge}, with differences from \Cref{alg:classicalInsertEdge} highlighted in \textcolor{blue}{blue}. 

\subparagraph*{Preprocessing.}
We simulate the predicted insertions $\hat{e}_1,\dots,\hat{e}_m$ in order,
computing the sequence of predicted BFS trees $\hat{T}_0,\hat{T}_1,\dots,\hat{T}_m$.
For each $i$, we store the arrays $\hLevel[i]{\cdot}$ and $\hPar[i]{\cdot}$
corresponding to $\hat{T}_i$. This phase runs entirely offline before any real
update arrives.

\subparagraph*{Online Phase.}
We process the real insertions $e_1,\dots,e_m$ one by one, maintaining the
adjacency list $\Out{\cdot}$ of the current graph and an integer
$\mathit{i}$, initialised to $0$. Intuitively,
$\mathit{i}$ tracks the latest step at which the real and predicted prefix
sets coincide, so that $\hat{T}_{\mathit{i}}$ is a valid BFS tree of
$G_{\mathit{i}}$.

At step $j$, after inserting $e_j$ into $\Out{\cdot}$, we distinguish two
cases.

\medskip\noindent
\textbf{Case 1 (sequences agree).}
If $\{e_1,\dots,e_j\}=\{\hat{e}_1,\dots,\hat{e}_j\}$, then $\hat{T}_j$ is a
valid BFS tree of $G_j$. We answer the query directly from the stored arrays
$\hLevel[j]{\cdot}$ and $\hPar[j]{\cdot}$, and set $\mathit{i}\leftarrow j$.

\medskip\noindent
\textbf{Case 2 (sequences diverge).}
Otherwise, the two sequences have diverged at or before step $j$. We load the
stored snapshot $\hat{T}_{\mathit{i}}$, which is a valid BFS tree of
$G_{\mathit{i}}$, and call \textsc{batchInsertEdge} with the unprocessed
real edges $e_{\mathit{i}+1},\dots,e_j$ to repair the BFS tree. After
answering the query from the updated arrays $\Level{\cdot}$ and $\Par{\cdot}$,
we roll back all temporary changes so that the maintained data structures
revert to $\hat{T}_{\mathit{i}}$.



    

        
        

\begin{figure}[htbp]
    \centering
\includegraphics[width=1.1\linewidth]{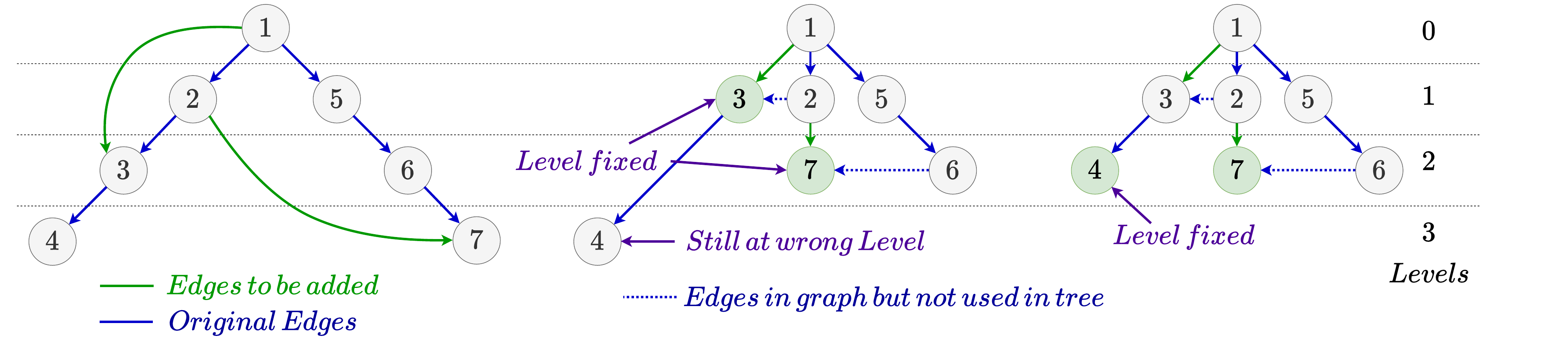}
    \caption{Incremental example. During initialization, nodes 3 and 7 move to \LRq{1} and \LRq{2} respectively and their levels are also fixed. The propagation pops node 3, looks at its children and fixes node 4 from level 3 to level 2 and also adds it in \LRq{2}. Then at the next level, both 4 and 7 are popped and propagation ends.}
    \label{fig:inc-example}
\end{figure}

\begin{algorithm}[htbp]
\caption{batchInsertEdge$(U_{i..j}=\{\langle e_{i+1},t_{i+1}\rangle,\ldots,\langle e_j,t_j\rangle\})$}
\label{alg:batchInsertEdge}
\begin{multicols}{2}

Initialize empty lists $\LRq{\ell}$ for all levels $\ell$\;

{
\color{blue}
$\ell^* \leftarrow n$\;
\ForEach{$((u,v),+) \in U_{i..j}$}{
    {\color{black}
    add $v$ to $\Out{u}$\;
    
    \lIf{$\Level{v} \leq \Level{u}+1$}{\textbf{continue}}
    }
\If{$v\in \LRq{\Level{v}}$}{Remove $v$ from $\LRq{\Level{v}}$ }
    {\color{black}
        $\Par{v} \leftarrow u$\;
        $\Level{v} \leftarrow \Level{u}+1$\;
        add $v$ to $\LRq{\Level{v}}$\;}
        $\ell^* \leftarrow \min(\ell^*,\, \Level{v})$\;
          
}
}

\For{$\ell \leftarrow \textcolor{blue}{\ell^*}$ \KwTo $n$}{
    \While{$\LRq{\ell}$ not empty}{
        pop $y$ from $\LRq{\ell}$\;
        
        \ForEach{$z \in \Out{y}$}{
            \lIf{$\Level{z} \leq \Level{y}+1$}{\textbf{continue}}
               
            \textcolor{blue}{\If{$z\in \LRq{\Level{z}}$}{Remove $z$ from $\LRq{\Level{z}}$ }}    
                $\Level{z} \leftarrow \Level{y}+1$\;
                $\Par{z} \leftarrow y$\;
                add $z$ to $\LRq{\Level{z}}$\;
           
        }
    }
}
\end{multicols}
\end{algorithm}

\newpage

\subsection{Correctness}

\begin{lemma}
\label{lem:incremental-correctness}
After \textsc{batchInsertEdge} processes all levels up to $\ell$, the arrays
$\Level{\cdot}$ and $\Par{\cdot}$ are correct for every vertex whose distance
from $s$ in $G_j$ is at most $\ell$.
\end{lemma}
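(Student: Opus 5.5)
We prove the statement by induction on $\ell$, writing $d(x)$ for the true distance of $x$ from $s$ in $G_j$. Two invariants are maintained throughout \textsc{batchInsertEdge}. \emph{(I1)} For every vertex $x$, $\Level{x}\ge d(x)$. Moreover, if $x\neq s$ then $\Par{x}$ is an in-neighbour of $x$ in $G_j$ with $\Level{\Par{x}}+1\le \Level{x}$, and equality holds at the moment the parent is assigned. \emph{(I2)} Whenever $\Level{x}$ is decreased, $x$ is placed in $\LRq{\Level{x}}$ and removed from any older list. Hence each vertex sits in at most one list, namely the list of its current level.

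Invariant (I1) holds initially because $\hat{T}_{i}$ is a valid BFS tree of $G_{i}\subseteq G_j$, and inserting edges only lowers distances. Each assignment $\Level{z}\gets\Level{y}+1$ is made along an edge $(y,z)\in E_j$. By induction $\Level{y}\ge d(y)$, so $\Level{y}+1\ge d(y)+1\ge d(z)$, and (I1) is preserved. Invariant (I2) is immediate from the blue removal lines. The first claim of the lemma is then that levels never drop below the true distance, so "correct" reduces to showing $\Level{x}\le d(x)$ once the loop has passed level $d(x)$.

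For the main induction, the hypothesis is: after the loop index has finished level $\ell-1$, every $x$ with $d(x)\le\ell-1$ has $\Level{x}=d(x)$. Take $v$ with $d(v)=\ell$ and a shortest path whose last edge is $(u,v)$, so $d(u)=\ell-1$ and $\Level{u}=\ell-1$ by hypothesis. We split into two cases.

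\textbf{Case A:} $\Level{u}$ was already $\ell-1$ in $\hat{T}_{i}$, that is, never changed. Then either $(u,v)\in E_{i}$, in which case $\hat{T}_{i}$ already gave $\Level{v}\le\ell$; or $(u,v)$ is one of the batch edges, in which case the initialization loop set $\Level{v}\le\Level{u}+1$ at the time it processed $(u,v)$. Since $u$'s level never changed, this gives $\Level{v}\le\ell$.

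\textbf{Case B:} $\Level{u}$ was lowered to $\ell-1$ at some point. By (I2), $u$ was then inserted into $\LRq{\ell-1}$ and stayed there, because its level cannot drop further below $d(u)$. The insertion must happen before the loop finishes level $\ell-1$. Decreases at iteration $\ell'$ only produce levels $\ell'+1$, and initialization produces levels at least $\ell^*$, so nothing is inserted into an already-processed list. Hence $u$ is popped during the processing of level $\ell-1$ and scans $(u,v)\in\Out{u}$, giving $\Level{v}\le\ell$.

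Combining with (I1), $\Level{v}=\ell$. For the parent: $\Par{v}$ satisfies $\Level{\Par{v}}\le\ell-1$. Because levels never go below true distance, $d(\Par{v})\le \ell-1$, and $d(\Par{v})\ge d(v)-1=\ell-1$ since $(\Par{v},v)\in E_j$. So $\Par{v}$ sits at the correct level $\ell-1$ and is a valid BFS parent.

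The main obstacle is the ordering argument in Case B. We must check that $\ell^*$ is at most the smallest level ever inserted, that no vertex is inserted into a list with index below the current loop index, and that a vertex whose level is lowered again is not processed with a stale level. The blue removal step together with the monotonicity of levels should settle these points. A secondary subtlety is the edge case where $\Par{v}$ was set from $\hat{T}_{i}$ and never touched, even though the level of that parent might later have decreased. Here we use that a parent's level only decreases, so the inequality in (I1) is preserved, and rely on the distance sandwich argument above rather than on equality.
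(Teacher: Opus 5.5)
Your proof is correct. It shares the paper's skeleton: induction on the level, relaxing the last edge $(u,v)$ of a shortest path. It is organised differently in ways that matter.

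First, the paper never isolates your invariant (I1). It argues the lower bound informally (``no shorter distance is possible since lower levels are finalised'') and asserts that $v$'s parent is set to the shortest-path predecessor $p$. You instead get both the exact level and the validity of $\Par{v}$ from the sandwich $d(\Par{v})\le\Level{\Par{v}}\le \ell-1\le d(\Par{v})$. This also covers vertices that keep a parent inherited from $\hat{T}_i$, where no relaxation ever sets $\Par{v}\gets p$.

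Second, the paper's inductive step says $p$ ``is processed at level $\ell$'' and relaxes $(p,v)$. That is only true when $p$'s level was lowered during the batch, i.e.\ your Case B. If $p$'s level is unchanged, $p$ is never enqueued, and the edge must be accounted for by $\hat{T}_i$ itself or by the initialisation loop. That is exactly your Case A, which the paper's argument skips.

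Third, your uniform induction from level $0$ replaces the paper's unproved base-case claim that everything below $\ell^*$ is already correct. For $\ell-1<\ell^*$, read ``after finishing level $\ell-1$'' as ``after initialisation''; Case B cannot occur there.

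The paper's version is shorter; yours actually closes these gaps. The three checks you flag as the ``main obstacle'' are already settled by facts you state. Iteration $\ell'$ only inserts into $\LRq{\ell'+1}$, and initialisation only inserts at levels $\ge\ell^*$. Finally, (I2) removes stale entries, and a vertex at its true distance can never be lowered again. So no further argument is needed there.
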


\begin{proof}
We proceed by induction on $\ell$, starting from $\ell^*$, the smallest level
at which any vertex is affected by the batch insertion.

\medskip\noindent
\textit{Base case} ($\ell=\ell^*$). Before the batch, the BFS tree is correct
for all levels below $\ell^*$. During initialisation, every vertex that can
reach distance $\ell^*$ via a newly inserted edge is identified and assigned
the correct level and parent. Hence all vertices at distance at most $\ell^*$
are correct after this step.

\medskip\noindent
\textit{Inductive step}. Suppose all vertices at distance at most $\ell$ are
correct. Consider any vertex $v$ with true distance $\ell+1$ in $G_j$. Its
shortest-path parent $p$ satisfies $\Level{p}=\ell$ and $(p,v)\in e_j$. When
$p$ is processed at level $\ell$, the edge $(p,v)$ is relaxed, setting
$\Level{v}\leftarrow\ell+1$ and $\Par{v}\leftarrow p$. No shorter distance is
possible for $v$, since all vertices at levels below $\ell+1$ are already
finalised by the induction hypothesis. Hence all vertices at distance at most
$\ell+1$ are correct after processing level $\ell+1$.
\end{proof}

\subsection{Analysis}
\subparagraph*{Preprocessing.} We compute the required datastructures using the classical Incremental BFS algorithm, maintaining a copy of $T_i$ and its associated structures after every update. This requires $O(mn)$ time and space.

\subparagraph*{Update \textsc{batchInsertEdge}.}
Each edge in the batch is processed once during the initialisation phase,
contributing $O(\eta_e)$ work in total. Let $A\subseteq V$ denote the set of
vertices whose level strictly decreases as a result of the batch insertion.
For each $v\in A$, we scan all edges in $\Out{v}$ once. The total running time
is therefore $O\!\left(\eta_e + \sum_{v \in A} \dg{v}\right) = O(\eta_e + \eta_v),
$
where the equality follows from Definition~\ref{def:eta-v}, since $A$ is
exactly the set of vertices whose level in $G_j$ differs from their predicted
level at step $\mathit{i}$. The space required is also bounded by $O(mn)$ due to previously built data structures.



\begin{theorem}
    Given a graph $G(V,E)$ having $n$ vertices and $m$ edges undergoing edge insertions with a predicted sequence of updates, the BFS tree can be reported in worst case $O(\eta_e+\eta_v)$ time using $O(mn)$ preprocessing time and space. 
\end{theorem}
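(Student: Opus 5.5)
The statement follows by assembling three pieces: correctness of what the online phase reports, the worst-case cost of one online step, and the offline preprocessing cost.

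\textbf{Correctness.} I will argue by cases on the online step $j$.
\begin{itemize}
\item In Case~1 the real and predicted prefixes coincide as sets, so $G_j=\hat{G}_j$. The stored $\hat{T}_j$ is then a BFS tree of $G_j$ by construction of the preprocessing, and reporting it costs nothing beyond pointing to the stored arrays.
\item In Case~2 the maintained structures are first rolled back to $\hat{T}_i$. This is a valid BFS tree of $G_i$, because $i$ is the last index where the prefixes agreed.
\item Next, \textsc{batchInsertEdge} is applied to $U_{i..j}$. By Lemma~\ref{lem:incremental-correctness}, applied with $\ell$ running up to $n$, the resulting $\Level{\cdot}$ and $\Par{\cdot}$ are correct for every vertex of $G_j$.
\item The base case of that lemma needs the tree to be correct on levels below $\ell^*$. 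This holds because insertions only decrease distances, and every decrease originates at a head of an inserted edge whose level is at least $\ell^*$.
\end{itemize}

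\textbf{Update time.} I will split the cost of one online step into three parts.
\begin{itemize}
\item \emph{Initialisation loop.} It touches each of the $j-i=\eta_e$ batch edges once with $O(1)$ work (list insert/remove by stored position). This gives $O(\eta_e)$.
\item \emph{Propagation loop.} A vertex $y$ is popped only if its level strictly decreased. When popped, it scans $\Out{y}$, and each relaxation costs $O(1)$.
\item \emph{Rollback.} Recording every modified entry in an undo log makes the rollback cost equal to the forward cost.
\end{itemize}

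The key charging step is for the propagation loop. Each vertex is popped at most once per level, and the final pop of $y$ happens at $y$'s final level. The vertices whose level decreased form the set $A=\{v:\hLevel[i]{v}\neq\Level[j]{v}\}$. Their scans therefore sum to $\sum_{v\in A}\dg{v}=\eta_v$ by Definition~\ref{def:eta-v}.

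\textbf{Main obstacle.} The delicate point is that a vertex may be inserted into $\LRq{\cdot}$ at several levels before being popped. I will handle this through the explicit removal step in Algorithm~\ref{alg:batchInsertEdge}: a vertex is only ever popped from its current list, and by the time level $\ell$ is processed, all levels below $\ell$ are final by the inductive invariant of Lemma~\ref{lem:incremental-correctness}. Hence each vertex in $A$ is popped exactly once. Iterating over empty level lists could still cost $O(n)$, but this is avoided by skipping empty levels via a pointer structure, or by charging it to the nonempty lists. With that, the worst-case step cost is $O(\eta_e+\eta_v)$.

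\textbf{Preprocessing.} Running the classical ES-tree insertion procedure (Algorithm~\ref{alg:classicalInsertEdge}) over $\hat{U}$ takes $O(mn)$ total time. Copying the $O(n)$-size arrays $\hLevel[i]{\cdot}$ and $\hPar[i]{\cdot}$ after each of the $m$ predicted updates costs $O(mn)$ time and space. Together with the three online pieces, this gives the claimed bounds.
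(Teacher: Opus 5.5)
Your proposal is correct to the same standard as the paper and follows its argument essentially verbatim: correctness via Lemma~\ref{lem:incremental-correctness}, $O(\eta_e)$ for the initialisation loop, one scan of $\Out{v}$ per vertex of $A$ giving $\eta_v$, and $O(mn)$ preprocessing by running the ES tree and copying $\hLevel[i]{\cdot}$ and $\hPar[i]{\cdot}$ after every predicted insertion. One caveat concerns your extra remark on empty level lists, a cost the paper silently ignores: charging them to the nonempty lists fails, because two inserted edges can lower one leaf to level $6$ and another to level $91$ with no affected vertex at any level in between, so you would need to visit the at most $\eta_e$ seed levels from the initialisation loop in sorted order (an extra cost such as $O(\eta_e\log\eta_e)$) or else accept an additive $O(n)$ term.
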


\begin{remark}
The BFS tree $T_j$ can also be obtained incrementally from $T_{j-1}$ by inserting only $e_j$. The affected vertices in this one-step transition form a subset of those in the batch transition from $\hat{T}_{\mathit{i}}$ to $T_j$, so it is never more expensive. We use the batch form only because it extends naturally to the fully dynamic setting, while the actual update still runs in worst-case $O(\eta_e+\eta_v^*)$ time.
\end{remark}

\incBFS*

\section{Decremental BFS with Predictions}
\label{sec:decremental}

We consider the decremental setting in which $G_0=(V,E_0)$ undergoes a
sequence of edge deletions.
%
%
We are given a predicted deletion sequence
$\hat{U}$ and a real deletion
sequence $U$, revealed online one edge at a time where all the $\hat{t}'s$ are $-$.
At step $j$, edge $e_j$ is deleted from the graph, producing $G_j=(V,E_j)$. Our goal is to exploit agreement between
$\hat{U}$ and $U$ to accelerate BFS-tree maintenance.

\subsection{Algorithmic Framework}

The algorithm proceeds in two phases: an offline preprocessing phase over the
predicted sequence, and an online phase that handles real updates. See \Cref{fig:dec-example} for an example. 
The full procedure is given in Algorithm~\ref{alg:batchDeleteEdge}, with differences from \Cref{alg:classicalDeleteEdge} highlighted in \textcolor{blue}{blue}.

\subparagraph*{Preprocessing.}
We simulate the predicted deletions $\hat{e}_1,\dots,\hat{e}_m$ in order,
computing the sequence of predicted BFS trees $\hat{T}_0,\hat{T}_1,\dots,\hat{T}_m$.
For each $j$, we store the arrays $\hLevel[j]{\cdot}$, $\hPar[j]{\cdot}$, and
the upper-parent lists $\UP{\cdot}$ corresponding to $\hat{T}_j$. This phase
runs entirely offline before any real update arrives.

\subparagraph*{Online Phase.}
We process the real deletions $e_1,\dots,e_m$ one by one, maintaining the
adjacency lists $\In{\cdot}$ and $\Out{\cdot}$ of the current graph and an
integer $\mathit{i}$, initialised to $0$. As in the
incremental setting, $\mathit{i}$ tracks the latest step at which the real
and predicted prefix sets coincide, so that $\hat{T}_{\mathit{i}}$ is a
valid BFS tree of $G_{\mathit{i}}$.

At step $j$, after deleting $e_j$ from $\In{\cdot}$ and $\Out{\cdot}$, we
distinguish two cases.

\medskip\noindent
\textbf{Case 1 (sequences agree).}
If $\{e_1,\dots,e_j\}=\{\hat{e}_1,\dots,\hat{e}_j\}$, then $\hat{T}_j$ is a
valid BFS tree of $G_j$. We answer the query directly from the stored arrays
$\hLevel[j]{\cdot}$ and $\hPar[j]{\cdot}$, and set $\mathit{i}\leftarrow j$.

\medskip\noindent
\textbf{Case 2 (sequences diverge).}
Otherwise, the two sequences have diverged at or before step $j$. We load the
stored snapshot $\hat{T}_{\mathit{i}}$, which is a valid BFS tree of
$G_{\mathit{i}}$, together with its arrays $\hLevel[\mathit{i}]{\cdot}$,
$\hPar[\mathit{i}]{\cdot}$, and upper-parent lists $\UP{\cdot}$. We then
call \textsc{batchDeleteEdge} with the unprocessed real edges
$e_{\mathit{i}+1},\dots,e_j$ to repair the BFS tree. After answering the
query from the updated arrays $\Level{\cdot}$ and $\Par{\cdot}$, we roll back
all temporary changes so that the maintained data structures revert to
$\hat{T}_{\mathit{i}}$.

\noindent
\begin{algorithm}[htbp]
\caption{batchDeleteEdge$(U_{i..j}=\{\langle e_{i+1},t_{i+1}\rangle,\ldots,\langle e_j,t_j\rangle\})$}
\label{alg:batchDeleteEdge}
\begin{multicols}{2}

Initialize empty lists $\LLq{\ell}$ for all levels $\ell$\;
{
\color{blue}
$\ell^* \leftarrow n$\;
\ForEach{$((u,v),-) \in U_{i..j}$}{
{\color{black}
    remove $v$ from $\Out{u}$\;
    remove $u$ from $\In{v}$\;

    \If{$u \in \UP{v}$}{
        remove $u$ from $\UP{v}$\;
        \uIf{$\UP{v}=\emptyset$}{
            add $v$ to $\LLq{\Level{v}}$\;
            \textcolor{blue}{
            $\ell^* \leftarrow \min(\ell^*,\,\Level{v})$\;}
        }
        \ElseIf{$\Par{v}=u$}{
            $\Par{v} \leftarrow$ $\UP{v}[0]$\;
        }
    }
    }
}
}
\columnbreak
\For{$\ell \leftarrow \textcolor{blue}{\ell^*}$ \KwTo $n$}{
  \While{$\LLq{\ell}$ not empty}{
        pop $x$ from $\LLq{\ell}$\;
        \lIf{$\UP{x}\neq \emptyset$}{
            $\Par{x} \leftarrow$ $\UP{x}[0]$
        }\Else{

        $\Level{x} \leftarrow \Level{x}+1$\;
        add $x$ to $\LLq{\Level{x}}$\;
        $\UP{x}\gets \emptyset$\;
        \ForEach{$p\in \In{x}$}{
        \If{$\Level{p}=\Level{x}-1$}{
        add $p$ to $\UP{x}$\;
        }
        }
        \ForEach{$y \in \Out{x}$}{
            \If{$x \in \UP{y}$}{
            remove $x$ from $\UP{y}$\;
            \uIf{$\UP{y}=\emptyset$}{
                add $y$ to $\LLq{\Level{y}}$\;
            }
            \ElseIf{$\Par{y}=x$}{
                $\Par{y} \leftarrow$ $\UP{y}[0]$\;
            }
            }
        }
        }
    }
}
\end{multicols}
\end{algorithm}

\begin{figure}[htbp]
    \centering
    \includegraphics[width=1.1\linewidth]{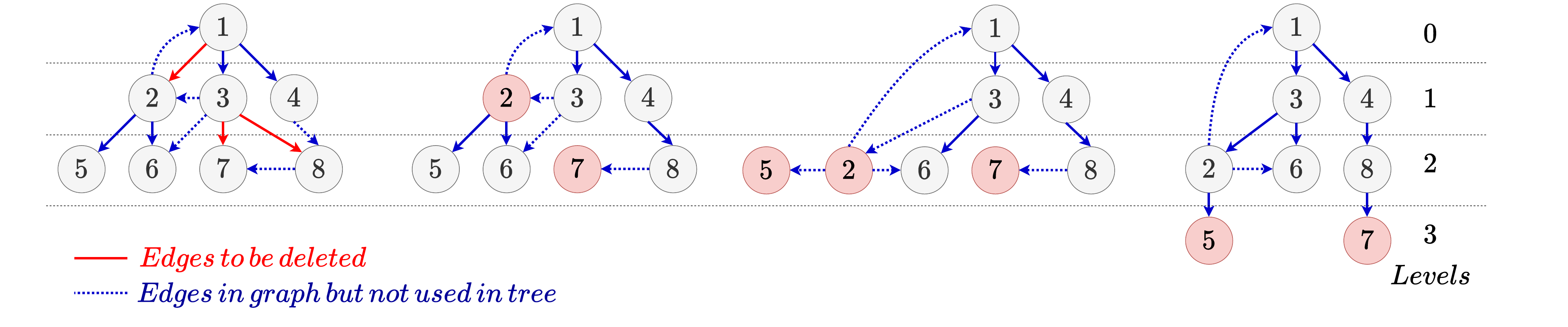}
    \caption{Decremental example. During initialization, nodes 2,7 and 8 lose their current parents but node 8 already has an extra node 4 in it's \UP{} so it reassigns to that directly while 2 and 7 are pushed to \LLq{1} and \LLq{2} respectively. During propagation, node 2 falls down while its children have it removed from their \UP{}, 5 is pushed to \LLq{2} while 6 directly gets 3 from its \UP{}. Then at the next level, 2 gets 3 as a parent but 5 and 7 fall down. Then at last level 5 and 7 get new parents.}
    \label{fig:dec-example}
\end{figure}

\subsection{Correctness}

\begin{lemma}
\label{lem:decremental-correctness}
After \textsc{batchDeleteEdge} processes all levels up to $\ell$, the arrays
$\Level{\cdot}$ and $\Par{\cdot}$ are correct for every vertex whose distance
from $s$ in $G_j$ is at most $\ell$.
\end{lemma}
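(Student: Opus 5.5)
We prove the statement by induction on $\ell$, maintaining a stronger invariant than the one stated. Write $d(v)$ for the true distance of $v$ from $s$ in $G_j$. The invariant, after the loop has finished level $\ell$, has three parts:
\begin{enumerate}
\item[(i)] every vertex $v$ with $d(v)\le \ell$ has $\Level{v}=d(v)$ and $\Par{v}\in\UP{v}$;
\item[(ii)] every vertex $v$ with $d(v)>\ell$ has $\Level{v}\ge \ell+1$ and $\Level{v}\le d(v)$;
\item[(iii)] for every $v$ that is not currently in some $\LLq{\cdot}$, $\UP{v}$ is exactly the set of in-neighbours $p$ in $G_j$ with $\Level{p}=\Level{v}-1$.
\end{enumerate}
Throughout we use the monotonicity of the decremental setting. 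Since $G_j\subseteq G_{\mathit{i}}$, every true distance satisfies $d(v)\ge \hLevel[\mathit{i}]{v}$. Levels are only ever incremented, so $\Level{v}\le d(v)$ should hold at all times; this is the usual ES-tree lower-bound invariant.

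\textbf{Base case.} We first analyse the initialisation loop. Before it, the snapshot $\hat{T}_{\mathit{i}}$ gives exact levels and exact $\UP{\cdot}$ lists for $G_{\mathit{i}}$. Deleting an edge $(u,v)$ can only remove $u$ from $\UP{v}$. Three outcomes are possible:
\begin{itemize}
\item $\UP{v}$ remains nonempty. Then $v$ keeps its level, which stays correct, and its parent is redirected to a surviving upper parent if necessary.
\item $\UP{v}$ becomes empty. Then $v$ is enqueued at $\LLq{\Level{v}}$, and $\ell^*$ records the smallest such level.
\item $u\notin\UP{v}$. Then nothing is affected.
\end{itemize}
Every vertex at level below $\ell^*$ keeps a nonempty, exact $\UP{}$ list and is therefore correct. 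This gives the invariant for $\ell=\ell^*-1$.

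\textbf{Inductive step.} We process $\LLq{\ell}$. A popped vertex $x$ either has a nonempty $\UP{x}$ or an empty one.
\begin{itemize}
\item If $\UP{x}$ is nonempty, there is a neighbour at level $\ell-1$. By hypothesis that neighbour has its true distance, so $d(x)=\ell$ and the reassignment of $\Par{x}$ is correct.
\item If $\UP{x}$ is empty, then by (iii) no in-neighbour has level $\ell-1$. Every in-neighbour $p$ with $d(p)\le\ell-1$ has its correct level by (i). So no in-neighbour has true distance $\ell-1$, which forces $d(x)\ge \ell+1$. Incrementing $\Level{x}$ therefore preserves $\Level{x}\le d(x)$.
\end{itemize}
In the second case we recompute $\UP{x}$ from $\In{x}$, which restores (iii) for $x$ at its new level. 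We also remove $x$ from the $\UP{y}$ of each out-neighbour $y$, which restores (iii) for the $y$'s. Any $y$ whose list empties is enqueued at its current level, and that level is at least $\ell$ by (ii).

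One detail needs checking: the loop must cover every vertex whose list empties during the pass over $\LLq{\ell}$. Such a vertex either is $x$ itself, re-enqueued at $\ell+1$, or is some $y$ with $\Level{y}=\ell+1$, because $x$ was in $\UP{y}$ only if $\Level{y}=\Level{x}+1$ before the increment. Both of these are pushed into $\LLq{\ell+1}$, which has not yet been processed, so nothing is missed. After $\LLq{\ell}$ is exhausted, every vertex still at level $\ell$ has a nonempty exact $\UP{}$ list, so its level equals its true distance and its parent lies in $\UP{}$. This gives (i) for $\ell$. Part (ii) follows because no vertex with $d(v)>\ell$ can retain level $\le\ell$ with a valid upper parent, by an induction on level.

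\textbf{Main obstacle.} The hardest part is invariant (iii) in the batch setting. Several deletions may hit the same vertex, and vertices can be enqueued more than once. We also need $\UP{y}$ to remain exact for vertices whose level has not yet been touched but whose upper parents have moved. We would argue that the only way an element leaves the set $\{p:\Level{p}=\Level{y}-1\}$ is a level increment of $p$ or the deletion of the edge $(p,y)$, and that both events explicitly remove $p$ from $\UP{y}$. Elements are never added except during recomputation. Together these keep (iii) exact for all vertices not in the queue.
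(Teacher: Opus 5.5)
Your level-by-level induction is the same as the paper's, and strengthening it with (i)--(iii) is the right idea, since everything hinges on an empty $\UP{x}$ certifying that $x$ has no in-neighbour at level $\Level{x}-1$. However, your argument for (iii), which you correctly flag as the main obstacle, proves only one direction. Write $S_y=\{p\in\In{y}:\Level{p}=\Level{y}-1\}$. You show that $p$ is removed from $\UP{y}$ whenever it leaves $S_y$, and that nothing is added outside recomputation; this gives $\UP{y}\subseteq S_y$. It does not rule out $S_y$ silently gaining a member: an in-neighbour $p$ whose level rises from $\Level{y}-2$ to $\Level{y}-1$ is never inserted into $\UP{y}$. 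Yet the emptiness test relies on exactly the missing inclusion $S_y\subseteq\UP{y}$.

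The missing ingredient is the invariant $\Level{y}\le\Level{p}+1$ for every current edge $(p,y)$, proved jointly with exactness of $\UP{\cdot}$:
\begin{itemize}
\item it holds for the snapshot $\hat{T}_{\mathit{i}}$;
\item deletions and increments of $p$ preserve it trivially;
\item an increment of $y$ preserves it, because with $\UP{y}$ exact and empty the invariant forces $\Level{p}\ge\Level{y}$ for all $p\in\In{y}$.
\end{itemize}
Also, (iii) must hold for all vertices at all times, not only those outside the lists, since you apply it to the popped vertex $x$.

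Second, your case analysis assumes that a vertex popped from $\LLq{\ell}$ has $\Level{x}=\ell$. This fails once duplicate entries exist, which you acknowledge can happen. Take edges $s\to x$, $s\to a$, $s\to b$, $b\to a$, $a\to x$ and delete $(s,x)$ and $(s,a)$.
\begin{itemize}
\item If $x$ is popped from $\LLq{1}$ before $a$, it is raised to level $2$ with $\UP{x}=\{a\}$.
\item Raising $a$ then empties $\UP{x}$ and puts a second copy of $x$ into $\LLq{2}$.
\item The first copy raises $x$ to $3$. The second is popped in the pass $\ell=2$ with $\Level{x}=3$ and $\UP{x}=\{a\}\neq\emptyset$.
\end{itemize}
Your conclusion $d(x)=\ell$ is therefore false here, since $d(x)=3$. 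Likewise, if such a stale copy found $\UP{x}$ empty, $x$ would be re-enqueued at $\Level{x}+1$, not at $\ell+1$.

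The repair is to argue at the vertex's actual level:
\begin{itemize}
\item A nonempty $\UP{x}$ merely resets $\Par{x}$ to an in-neighbour at level $\Level{x}-1$, which is harmless.
\item An empty $\UP{x}$ gives, by the edge invariant and exactness, $\Level{p}\ge\Level{x}$ for all $p\in\In{x}$. Hence $d(x)\ge\Level{x}+1$ by the lower-bound invariant. All new list entries (for $x$ and its out-neighbours) land at levels $>\ell$.
\end{itemize}
With these two repairs your invariants go through. The paper's own proof is the same induction and tacitly assumes both points, so these repairs are what complete either argument.
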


\begin{proof}
We proceed by induction on $\ell$, starting from $\ell^*$, the smallest level
at which any vertex is affected by the batch deletion.

\medskip\noindent
\textit{Base case} ($\ell=\ell^*$). Before processing the batch, the BFS tree
is correct up to level $\ell^*-1$. During the initialisation pass, every
vertex that directly lost all candidate parents at level $\ell^*-1$ is
enqueued in $\LLq{\ell^*}$ and repaired; vertices that still retain at least
one candidate in $\UP{\cdot}$ are immediately assigned a valid parent. Hence
all vertices at distance at most $\ell^*$ are correct after initialisation.

\medskip\noindent
\textit{Inductive step}. Assume all vertices at distance at most $\ell$ are
correct. When we process level $\ell+1$, each vertex $v$ dequeued from
$\LLq{\ell+1}$ either finds a valid parent in $\UP{v}$ and is assigned one,
or has $\UP{v}=\varnothing$ and is moved to level $\ell+2$ and re-enqueued.
In the latter case, the children of $v$ whose sole candidate parent was $v$
are updated accordingly. Since all vertices at levels at most $\ell$ are
already correct by the induction hypothesis, correctness extends to level
$\ell+1$, completing the induction.
\end{proof}

\subsection{Analysis}

\subparagraph*{Preprocessing.} We compute the required data structures using the classical Decremental BFS algorithm, maintaining a copy of $T_i$ and its associated structures after every update. This requires $O(mn)$ time and space. However, we additionally maintain a copy of $\texttt{UP}_i[v]$ for each $\hat{G}_i$ requiring $O(m^2)$ time and space. 

\subparagraph*{Update \textsc{batchDeleteEdge}.}
Each edge in the batch is processed once during the initialisation phase,
contributing $O(\eta_e)$ work in total. Let $A\subseteq V$ denote the set of
vertices whose level strictly increases as a result of the batch deletion. For
each $v\in A$, we scan all edges in $\In{v}$ and $\Out{v}$ once per level
increase of $v$. The total running time is therefore
\[
    O\!\left(\eta_e + \sum_{v \in A} \dg v\cdot|\Level[j]{v} - \hLevel[\mathit{i}]{v}|\right)
    \;\subseteq\;
    O\!\left(\eta_e + \sum_{v \in V} \dg v\cdot|\Level[j]{v} - \hLevel[\mathit{i}]{v}|\right)
    = O(\eta_e + \eta^*_v),
\]
where the last equality follows from Definition~\ref{def:eta-star}. The space required is also bound by the data structures built during preprocessing. 

\begin{remark}
In the worst case each vertex's level may increase by $O(n)$, so
$\eta^*_v = O(n\sum_{v\in V}\dg v) = O(nm)$.
When $\eta_e+\eta^*_v$ exceeds $O(m)$, we fall back to rebuilding the BFS
tree from scratch in $O(m)$ time (similar approach was previously used by \cite{hanauer2020fully}).
\end{remark}


\begin{theorem}
    Given a graph $G(V,E)$ having $n$ vertices and $m$ edges undergoing edge deletions with a predicted sequence of updates, the BFS tree can be reported in worst case $O(\min\{m,\eta_e+\eta^*_v\})$ time using $O(m^2)$ preprocessing time and space. 
\end{theorem}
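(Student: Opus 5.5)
Our plan combines three ingredients: the correctness lemma (Lemma~\ref{lem:decremental-correctness}), the running-time analysis of \textsc{batchDeleteEdge}, and the preprocessing cost. We handle the online phase by cases, matching the framework described above.

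\textbf{Preprocessing.} We first run the classical decremental ES-tree (Algorithm~\ref{alg:classicalDeleteEdge}) on the predicted sequence $\hat{U}$ and store, after each of the at most $m$ deletions, the snapshot $\hLevel[i]{\cdot}$, $\hPar[i]{\cdot}$ and the lists $\hat{\texttt{UP}}_i[\cdot]$. Each snapshot has size $O(m+n)$, since $\sum_v|\UP{v}|\le m$. Copying $m$ snapshots therefore costs $O(m^2)$ time and space, which dominates the $O(mn)$ total cost of the ES-tree simulation. To test in $O(1)$ time per step whether $\{e_1,\dots,e_j\}=\{\hat{e}_1,\dots,\hat{e}_j\}$, we maintain a counter of mismatched elements between the two prefix multisets, using a hash table or a position array indexed by the predicted edges.

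\textbf{Online phase.} In Case~1 we read $\hat{T}_j$ directly, in $O(1)$ time if we hand out a pointer to the snapshot. In Case~2 we load $\hat{T}_i$, which is a valid BFS tree of $G_i$ together with valid $\UP{\cdot}$ lists, and call \textsc{batchDeleteEdge} on $U_{i..j}$, whose size is $\eta_e=j-i$. Correctness of the resulting $\Level{\cdot},\Par{\cdot}$ for $G_j$ follows from Lemma~\ref{lem:decremental-correctness} applied with $\ell=n$.

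\textbf{Running time of Case~2.} The initialisation loop costs $O(\eta_e)$. Each processing of a vertex $x$ whose level increases costs $O(\dg{x})$ for rebuilding $\UP{x}$ and scanning $\Out{x}$. Since levels only increase in the decremental setting, $x$ is processed this way exactly $\Level[j]{x}-\hLevel[i]{x}$ times. Pops from $\LLq{\cdot}$ that merely reassign a parent cost $O(1)$, and each such pop is charged to an edge of the batch or to a scan at a level-increase step. Summing gives $O(\eta_e+\eta^*_v)$.

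\textbf{Main obstacle.} The delicate parts are the $O(\min\{m,\cdot\})$ cap and the rollback. For the cap, we run \textsc{batchDeleteEdge} with a step counter and abort once the counter exceeds $cm$. We then recompute a BFS of $G_j$ from scratch in $O(m)$, recomputing $\UP{}$ lists with one scan of the edges. The total cost is $O(\min\{m,\eta_e+\eta^*_v\})$.

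For the rollback, we log every modified entry during the repair and undo the changes in time proportional to the work already done. Alternatively, we work on a scratch copy created lazily, with entries copied on first write. In both variants the restoration cost is bounded by the repair cost, so it is absorbed. The remaining concern is that removing $x$ from $\UP{y}$ must take $O(1)$ time. We handle this by storing, for each edge, a pointer to its position in the $\UP{}$ list, and these pointers are also snapshotted within the $O(m^2)$ budget.
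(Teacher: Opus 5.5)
Your proposal is correct and takes essentially the same route as the paper: $O(m^2)$ snapshots of $\hLevel[i]{\cdot}$, $\hPar[i]{\cdot}$ and the $\UP{\cdot}$ lists, correctness from Lemma~\ref{lem:decremental-correctness}, $O(\eta_e)$ initialisation plus $O(\dg{x})$ per unit level increase giving $O(\eta_e+\eta^*_v)$, and an $O(m)$ from-scratch rebuild as fallback; your abort-after-$cm$-steps counter is in fact the right way to implement that fallback, since $\eta^*_v$ is not known in advance and the paper's remark never says how exceeding $m$ is detected. One loose end you share with the paper is the additive $O(n)$ cost of initialising and sweeping all lists $\LLq{\ell}$ for $\ell=\ell^*,\dots,n$, so to keep the bound purely in terms of $\eta_e+\eta^*_v$ the sweep must touch only nonempty levels.
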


\begin{remark}
The BFS tree $T_j$ can be obtained from $T_{j-1}$ by deleting only $e_j$. The affected vertices in this one-step transition are a subset of those in the batch transition from $\hat{T}_{\mathit{i}}$ to $T_j$, so it is never more expensive. We use the batch form only because it extends naturally to the fully dynamic setting, while the actual update still runs in worst-case $O(\eta_e+\eta_v^*)$ time.
\end{remark}

\subsection{Pointer-Based Decremental Storage}
\label{sec:pointerDec}
In the decremental setting, we can replace $\UP{\cdot}$ by one cursor per vertex in $\In{v}$. Thus, instead of storing full upper-parent lists, we store $\texttt{ptr}[j]=(\texttt{ptr}_j[1],\dots,\texttt{ptr}_j[n])$, where $\texttt{ptr}_j[v]$ is the current scan position at step $j$.

After deleting $(u,v)$, if it is not the parent edge of $v$, only the adjacency lists change. Otherwise, we scan forward from $\texttt{ptr}[v]$ to find a valid parent at level $\Level{v}-1$; if none exists, we increase $\Level{v}$, enqueue $v$, and reset $\texttt{ptr}[v]$. This stores one pointer per vertex per snapshot, using $O(n)$ space per update and $O(mn)$ overall, and reproduces the parent-selection of $\UP{\cdot}$.

\begin{remark}
The $\UP{\cdot}$ formulation is kept for consistency with the fully dynamic framework. For decremental BFS, the pointer-based version is simpler and space-efficient, hence preferred.
\end{remark}

\decBFS*

\section{Fully Dynamic BFS with Predictions}
\label{sec:fully-dynamic}

We consider the fully dynamic setting in which $G_0=(V,E_0)$ undergoes a
sequence of edge insertions and deletions.
%
%
We are given a predicted update sequence
$\hat{U}$ and a real update sequence
$U$, revealed online one update at a time. At
step $j$, update $e_j$ is applied to the graph, producing $G_j=(V,E_j)$. Our goal is to exploit agreement between $\hat{U}$ and
$U$ to accelerate BFS-tree maintenance.

\subsection{Algorithmic Framework}

The algorithm proceeds in two phases: an offline preprocessing phase over the
predicted sequence, and an online phase that handles real updates. See \Cref{fig:dyn-example} for an example. 
The full procedure is given in Algorithm~\ref{alg:batchDynamicUpdate}, with differences from \Cref{alg:batchInsertEdge} highlighted in \textcolor{blue}{blue}. 

\subparagraph*{Preprocessing.}
We simulate the predicted updates $\hat{e}_1,\dots,\hat{e}_m$ in order,
computing the sequence of predicted BFS trees $\hat{T}_0,\hat{T}_1,\dots,\hat{T}_m$.
For each $j$, we store the arrays $\hLevel[j]{\cdot}$, $\hPar[j]{\cdot}$, and
the upper-parent lists $\UP{\cdot}$ corresponding to $\hat{T}_j$. This phase
runs entirely offline before any real update arrives.

\subparagraph*{Online Phase.}
We process the real updates $e_1,\dots,e_m$ one by one, maintaining the
adjacency lists $\In{\cdot}$ and $\Out{\cdot}$ of the current graph and an
integer $\mathit{i}$, initialised to $0$. As before,
$\mathit{i}$ tracks the latest step at which the real and predicted prefix
sets coincide, so that $\hat{T}_{\mathit{i}}$ is a valid BFS tree of
$G_{\mathit{i}}$.

At step $j$, we first apply update $(u,v,t)$ to the adjacency lists: if
$t=+$ we insert $(u,v)$ into $\In{\cdot}$ and $\Out{\cdot}$; if $t=-$ we
delete $(u,v)$ from $\In{\cdot}$ and $\Out{\cdot}$. We then have two
cases.

\medskip\noindent
\textbf{Case 1 (sequences agree).}
If $\{e_1,\dots,e_j\}=\{\hat{e}_1,\dots,\hat{e}_j\}$, then $\hat{T}_j$ is a
valid BFS tree of $G_j$. We answer the query directly from the stored arrays
$\hLevel[j]{\cdot}$ and $\hPar[j]{\cdot}$, and set $\mathit{i}\leftarrow j$.

\medskip\noindent
\textbf{Case 2 (sequences diverge).}
Otherwise, the two sequences have diverged at or before step $j$. We load the
stored snapshot $\hat{T}_{\mathit{i}}$, together with its arrays
$\hLevel[\mathit{i}]{\cdot}$, $\hPar[\mathit{i}]{\cdot}$, and
upper-parent lists $\UP{\cdot}$. 
We call
\textsc{batchDynamicUpdate}$(U_{i..j})$
to repair the
BFS tree. After answering the query from the updated arrays $\Level{\cdot}$ and
$\Par{\cdot}$, we roll back all temporary changes so that the maintained data
structures revert to $\hat{T}_{\mathit{i}}$.

\begin{algorithm}[htbp]
\caption{batchDynamicUpdate$(U_{i..j}=\{\langle e_{i+1},t_{i+1}\rangle,\ldots,\langle e_j,t_j\rangle\})$}
\label{alg:batchDynamicUpdate}
\begin{multicols}{2}

Initialize empty lists $\LLq{\ell}, \LRq{\ell}$ for all $\ell$\;
$\ell^* \leftarrow n$\;
\BlankLine

\ForEach{$((u,v),-) \in U_{i..j}$}{
{\color{black}
    remove $v$ from $\Out{u}$\;
    remove $u$ from $\In{v}$\;

    \If{$u \in \UP{v}$}{
        remove $u$ from $\UP{v}$\;
        \uIf{$\UP{v}=\emptyset$}{
            add $v$ to $\LLq{\Level{v}}$\;
            
            $\ell^* \leftarrow \min(\ell^*,\,\Level{v})$\;
        }
        \ElseIf{$\Par{v}=u$}{
            $\Par{v} \leftarrow$ $\UP{v}[0]$\;
        }
    }
    }
}
\BlankLine

\ForEach{$((u,v),+) \in U_{i..j}$}
{
    add $v$ to $\Out{u}$\; 
    \textcolor{blue}{
    add $u$ to $\In{v}$\;
    }
    \uIf{$\Level{v}>\Level{u}+1$}{
        $\Par{v}\leftarrow u$\;
        $\Level{v}\leftarrow \Level{u}+1$\;
        \textcolor{blue}{
        $\UP{v}\leftarrow\{u\}$\;
        }
        add $v$ to $\LRq{\Level{v}}$\;
        $\ell^* \leftarrow \min(\ell^*,\Level{v})$\;
    }
    \textcolor{blue}{
    \ElseIf{$\Level{v}=\Level{u}+1$}{
        add $u$ to $\UP{v}$\;
    }
    }
}

\For{$\ell \leftarrow \ell^*$ \KwTo $n$}{
    \BlankLine
   \While{$\LLq{\ell}$ not empty}{
        pop $x$ from $\LLq{\ell}$\;
        \lIf{$\UP{x}\neq \emptyset$}{
            $\Par{x} \leftarrow$ $\UP{x}[0]$
        }\Else{

        $\Level{x} \leftarrow \Level{x}+1$\;
        add $x$ to $\LLq{\Level{x}}$\;
        $\UP{x}\gets \emptyset$\;
        \ForEach{$p\in \In{x}$}{
        \If{$\Level{p}=\Level{x}-1$}{
        add $p$ to $\UP{x}$\;
        }
        }
        \ForEach{$y \in \Out{x}$}{
            \If{$x \in \UP{y}$}{
            remove $x$ from $\UP{y}$\;
            \uIf{$\UP{y}=\emptyset$}{
                add $y$ to $\LLq{\Level{y}}$\;
            }
            \ElseIf{$\Par{y}=x$}{
                $\Par{y} \leftarrow$ $\UP{y}[0]$\;
            }
            }
        }
        }
    }
    \BlankLine
    \While{$\LRq{\ell}$ not empty}{
        pop $y$ from $\LRq{\ell}$\;
        \ForEach{$z\in\Out{y}$}{
            \textcolor{blue}{
            \uIf{\(z\in\LLq{l+1}\)}{
                \(\Par{z}\leftarrow y\)\;
                remove \(z\) from \(\LLq{l+1}\)\;
            }
            }
            \uElseIf{$\Level{z}>\Level{y}+1$}{
                $\Level{z}\leftarrow \Level{y}+1$\;
                $\Par{z}\leftarrow y$\; 
                \textcolor{blue}{
                $\UP{z}\leftarrow\{y\}$\;
                }
                add $z$ to $\LRq{\Level{z}}$\;
            }
            \textcolor{blue}{
            \ElseIf{$\Level{z}=\Level{y}+1$}{
                add $y$ to $\UP{z}$\;
            }
            }
        }
    }

}

\end{multicols}
\end{algorithm}

\begin{figure}[htbp]
    \centering
    \includegraphics[width=1.1\linewidth]{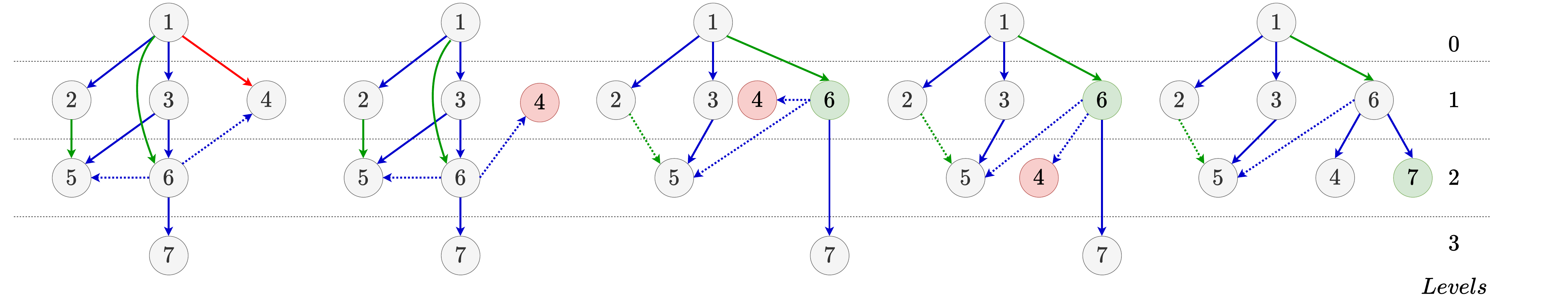}
    \caption{Fully dynamic example. During initialization, node 4 loses it's current parent and is pushed to \LLq{1}, meanwhile 2 is pushed in \UP{} of 5 and node 6 is pushed to \LRq{1}. During propagation, node 4 falls down and goes into \LLq{2}, while 6 is popped from \LRq{1} and fixes level of both 4 and 7.}
    \label{fig:dyn-example}
\end{figure}

\subsection{Correctness}

\begin{lemma}
\label{lem:fully-dynamic-correctness}
After \textsc{batchDynamicUpdate} sweeps levels from $\ell^*$ up to $\ell$,
the arrays $\Level{\cdot}$ and $\Par{\cdot}$ are correct for every vertex
whose distance from $s$ in $G_j$ is at most $\ell$.
\end{lemma}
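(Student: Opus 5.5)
We argue by induction on $\ell$, mirroring \Cref{lem:incremental-correctness} and \Cref{lem:decremental-correctness}. Write $d_j(v)$ for the true distance in $G_j$.

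The invariant we carry is stronger than the statement. After the sweep of level $\ell$:
\begin{enumerate}
\item every $v$ with $d_j(v)\le\ell$ has $\Level{v}=d_j(v)$, and $\Par{v}$ is an in-neighbour at level $d_j(v)-1$;
\item every $v$ with $d_j(v)>\ell$ has $\Level{v}>\ell$;
\item for every $v$ with $\Level{v}\le\ell+1$, the list $\UP{v}$ contains every current in-neighbour $p$ with $\Level{p}=\Level{v}-1$ and $\Level{p}\le\ell$, and nothing else.
\end{enumerate}

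\textbf{Initialisation.} Start from $\hat{T}_i$, which is a valid BFS tree of $G_i$, together with its exact $\UP{\cdot}$ lists. The deletion pass removes deleted upper parents. Each vertex whose list becomes empty is enqueued in $\LLq{\Level{v}}$. The insertion pass then handles each inserted edge $(u,v)$: it either lowers $\Level{v}$ (resetting $\UP{v}=\{u\}$ and enqueuing $v$ in $\LRq{\cdot}$) or, if $\Level{v}=\Level{u}+1$, appends $u$ to $\UP{v}$. Every vertex whose level must move is therefore represented in some queue at a level $\ge\ell^*$.

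\textbf{Base case.} Below $\ell^*$ nothing has changed. No deletion emptied an $\UP{}$ list there, and no insertion produced a shorter path there. Hence the old levels are still distances for these vertices, which gives the invariant at $\ell^*-1$.

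\textbf{Inductive step.} Let $v$ satisfy $d_j(v)=\ell+1$, and let $p$ be a predecessor on a shortest path, so $d_j(p)=\ell$. By hypothesis $p$ is finalised with $\Level{p}=\ell$. There are two cases.
\begin{itemize}
\item $\Level{p}$ decreased during the batch. Then $p$ entered $\LRq{\ell}$, and its relaxation in the level-$\ell$ sweep either lowers $v$ to $\ell+1$ or adds $p$ to $\UP{v}$.
\item $\Level{p}$ never changed. Then $p\in\UP{v}$ from the start or from the insertion pass, provided the edge $(p,v)$ survives.
\end{itemize}
Conversely, a vertex $x$ with $d_j(x)>\ell+1$ but current $\Level{x}=\ell+1$ must lose every upper parent. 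Each such loss occurs when the parent is pushed down at a level $\le\ell$ and deletes $x$ from its $\UP{}$, or when the parent's edge is deleted. Either way $x$ lands in $\LLq{\ell+1}$ and is pushed to $\ell+2$ when popped. We must also show that a vertex with $d_j(v)=\ell+1$ sitting in $\LLq{\ell+1}$ is not pushed down wrongly. This is exactly the purpose of the blue rule: when $y\in\LRq{\ell}$ relaxes an edge to $z\in\LLq{\ell+1}$, it assigns $\Par{z}\leftarrow y$ and removes $z$ from the queue.

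\textbf{Main obstacle: ordering inside one level.} At level $\ell$ the $\LLq{\ell}$ loop runs before the $\LRq{\ell}$ loop. So a vertex $z$ at level $\ell+1$ could be pulled into $\LLq{\ell+1}$ by a push-down at level $\ell$, while a lowered vertex $y$ at level $\ell$ also supplies $z$. The $\LRq{\ell}$ loop finishes before $\LLq{\ell+1}$ is popped, and when $y$ relaxes $(y,z)$ it either catches $z$ through the blue rule or appends $y$ to $\UP{z}$. In both cases $z$ is rescued before it could be pushed down.

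For the other ordering, suppose a vertex at level $\ell$ in $\LLq{\ell}$ gets lowered or resolved by some $y\in\LRq{\ell-1}$. That was already handled at level $\ell-1$, because $\LRq{\ell-1}$ was fully processed before $\LLq{\ell}$ was popped. We would verify this exchange carefully; the rest is the routine check that $\UP{}$ updates preserve part~3 of the invariant.

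Finally, push-downs in $\LLq{}$ raise levels one step at a time, and lowering is monotone. So each vertex settles at its correct distance when the sweep reaches that level, which completes the induction.
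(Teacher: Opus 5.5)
\textbf{There is a genuine gap, in the step you flag as the main obstacle and leave unverified.} Your induction has the same skeleton as the paper's proof: base case at $\ell^*$, then at each level the deletion-driven push-downs followed by the insertion-driven relaxations. Your base case is fine.

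Three of your claims tacitly assume that a vertex $y$ popped from $\LRq{\ell}$ still has $\Level{y}=\ell=d_j(y)$. They are your rescue argument, your dichotomy ``$\Level{p}$ decreased / never changed'', and your closing claim that every vertex settles when the sweep reaches its level. That assumption is false. The insertion pass lowers $v$ to $\Level{u}+1$ using the \emph{pre-deletion} level of $u$. If $u$ is later pushed down, $\UP{v}=\{u\}$ empties and $v$ is pushed down as well, but its entry in $\LRq{\cdot}$ is never removed. When this stale entry is popped, the blue rule sets $\Par{z}\leftarrow v$ for some $z\in\LLq{\ell+1}$ lying at the \emph{same} level as $v$. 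It then deletes $z$ from $\LLq{\ell+1}$, so $z$ is never repaired. This also refutes your ``conversely'' claim that a vertex with $d_j(x)>\ell+1$ and $\Level{x}=\ell+1$ is eventually pushed to $\ell+2$.

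\textbf{A concrete instance.} Let $G_i$ have edges $(s,u),(s,w_1),(w_1,w_2),(w_2,u),(w_2,v),(u,x),(x,z),(v,z)$. Then $\Level{u}=\Level{w_1}=1$, $\Level{w_2}=\Level{x}=2$, $\Level{v}=\Level{z}=3$, $\UP{x}=\{u\}$ and $\UP{z}=\{x\}$. Let the batch delete $(s,u)$ and insert $(u,v)$. The trace is as follows.
\begin{enumerate}
\item The insertion pass sets $\Level{v}=2$ and $\UP{v}=\{u\}$, and puts $v$ into $\LRq{2}$.
\item At level $1$, $u$ is pushed to $2$, emptying $\UP{x}$ and $\UP{v}$.
\item At level $2$, the $\LLq{2}$ loop pushes $u,x,v$ to level $3$. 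Pushing $x$ empties $\UP{z}$, so $z$ enters $\LLq{3}$.
\item Then $v$ is popped from $\LRq{2}$ with $\Level{v}=3$, the blue rule fires on $z$, and nothing touches $z$ afterwards.
\end{enumerate}
The output has $\Level{z}=3$ and $\Par{z}=v$ with $\Level{v}=3$, whereas $d_j(z)=4$.

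\textbf{Two further problems with your invariant.}
\begin{enumerate}
\item Even when $y$ genuinely has level $\ell$, the blue rule never inserts $y$ into $\UP{z}$. So part~(3) fails whenever the rule fires.
\item The rule never checks $\Level{z}=\ell+1$, although $\LLq{}$ entries also go stale. A vertex queued by the deletion pass and later lowered can receive a parent at its own level.
\end{enumerate}

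\textbf{The paper's proof does not close this gap either.} Its inductive step asserts the same conclusion (``every vertex whose true distance is $\ell+1$ receives a correct parent and distance'') without addressing any of these issues. As the example shows, the lemma does not hold for the pseudocode verbatim. Closing the gap requires modifying the algorithm, not merely a more careful proof. The stale queue entries, the unguarded blue rule, and the stale $\UP{}$ memberships left behind by tentative lowering all have to be handled.
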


\begin{proof}
We proceed by induction on $\ell$, starting from $\ell^*$, the smallest level
at which any vertex is affected by the batch update.

\medskip\noindent
\textit{Base case} ($\ell=\ell^*$). Before applying the batch $U_{i..j}$, 
the BFS tree is correct up to level
$\ell^*-1$. The initialization pass processes deletions before insertions,
enqueuing and fixing every vertex whose parent or candidate-parent set changed
directly at level $\ell^*$. Hence all vertices at distance at most $\ell^*$
are correct after the initialisation pass.

\medskip\noindent
\textit{Inductive step}. Assume all vertices at distance at most $\ell$ are
correct. When processing level $\ell+1$, we first handle deletion-driven
level increases: each such vertex either finds a candidate parent at level
$\ell$ and is assigned one, or is raised to the next level and re-enqueued
with its children updated accordingly. We then handle insertion-driven level
decreases: these relax outgoing edges and assign parent pointers for vertices
that newly obtain distance $\ell+1$. Since deletions are processed before
insertions at each level, and all vertices at levels at most $\ell$ are
already correct by the induction hypothesis, every vertex whose true distance
is $\ell+1$ receives a correct parent and distance during this sweep. Hence
correctness extends to level $\ell+1$, completing the induction.
\end{proof}

\subsection{Analysis}
\subparagraph*{Preprocessing.} We compute the required data structures using the trivial BFS algorithm, maintaining a copy of $T_i$ and $\texttt{UP}_i[v]$ for each $\hat{G}_i$, requiring $O(m^2)$ time and space.

\subparagraph*{Update \textsc{batchDynamicUpdate}.}
Each edge in $U_{i..j}$ 
is processed once
during the initialisation phase, contributing $O(\eta_e)$ work in total. Let
$A_1\subseteq V$ denote the set of vertices whose level strictly decreases due
to batch insertions, and let $A_2\subseteq V$ denote the set of vertices whose
level strictly increases due to batch deletions. For each $v\in A_1$ we scan
$\Out{v}$ once; for each $v\in A_2$ we scan $\In{v}$ and $\Out{v}$ once per
level increase. The total running time is therefore
\[
\begin{aligned}
&O\!\left(
    \eta_e
    + \sum_{v \in A_1} \dg v
    + \sum_{v \in A_2} \dg v\cdot|\Level[j]{v} - \hLevel[\mathit{i}]{v}|
\right) \\
\;\subseteq\;
&O\!\left(
    \eta_e
    + \sum_{v \in V} \dg v\cdot|\Level[j]{v} - \hLevel[\mathit{i}]{v}|
\right)
= O(\eta_e + \eta^*_v),
\end{aligned}
\]
where the inclusion holds because each term in the $A_1$ sum is bounded by the
corresponding weighted term (the level difference is at least $1$), and the
last equality follows from Definition~\ref{def:eta-star}. The space required is again bounded by the size of data structures built during preprocessing.

\begin{remark}
In the worst case each vertex level may change by $O(n)$, so
$\eta^*_v = O(n\sum_{v\in V}\dg v) = O(nm)$.
When $\eta_e+\eta^*_v$ exceeds $O(m)$, we fall back to rebuilding the BFS
tree from scratch in $O(m)$ time  (similar approach was previously used by \cite{hanauer2020fully}).
\end{remark}


\fdBFS*

\begin{remark}
Unlike the incremental and decremental settings, processing the single update
$e_j$ directly on $T_{j-1}$ is not necessarily better than the batch
transition from $\hat{T}_{\mathit{i}}$ to $T_j$. Insertions and deletions
within a batch may offset one another, resulting in fewer net level changes
than would arise from processing each update individually. The batch
formulation is therefore the natural choice for the fully dynamic setting as otherwise the  worst case complexity is not bounded accordingly, losing the property of smooth degradation of the algorithm.
\end{remark}



\section{Space-Efficient Incremental BFS algorithm}
\label{sec:storage}

To use predictions, we must store the sequence of predicted BFS trees
$\hat{T}_0,\hat{T}_1,\dots,\hat{T}_m$. Naive storage requires $O(mn)$ space
if we keep $\hPar[j]{\cdot}$ and $\hLevel[j]{\cdot}$ for all $n$ vertices at
all $m$ steps. In the incremental setting, we can exploit the fact that BFS
levels only decrease to compress this history to $O(n^2)$ space.


\subparagraph*{Data structure.}
For each vertex $v\in V$, maintain a chronological list of the steps at which
its parent or level changes:
\[
    \mathrm{data}[v]
    =
    \bigl\{(p^{(v)}_1,d^{(v)}_1,t^{(v)}_1),\,
            (p^{(v)}_2,d^{(v)}_2,t^{(v)}_2),\,
            \dots\bigr\}.
\]
Each tuple $(p,d,t)$ records that at predicted step $t$ the parent of $v$
became $p$ and its level became $d$. If no tuple satisfies $t\le j$, then the
values from $\hat{T}_0$ apply.

\subparagraph*{Space bound.}
In the incremental-only model, every level change of a vertex is a strict
decrease. Since BFS levels are integers in $\{0,1,\dots,n-1\}$, each vertex can
change parent and level at most $n-1$ times. Therefore the total number of
stored tuples is at most $n(n-1)$, which gives $O(n^2)$ space.

\subparagraph*{Access cost.}
To recover the state of a single vertex at step $j$, we binary-search
$\mathrm{data}[v]$ for the last tuple with timestamp at most $j$. This costs
$O(\log k_v)$ time, where $k_v=|\mathrm{data}[v]| \le n-1$, hence
$O(\log n)$ in the worst case. Reconstructing one full snapshot naively costs
$O(n\log n)$.

\spaceOpt*

\newpage
\section{Space-Efficient  Decremental BFS algorithm}
\label{sec:storage-decremental-fullyDynamic}

To use predictions, we must store the sequence of predicted BFS trees
$\hat{T}_0,\hat{T}_1,\dots,\hat{T}_m$. Naively storing full trees requires
$O(mn)$ space but we plan to optimize it.

\subsection{Per-Vertex Change Lists}

For each vertex $v\in V$, maintain a chronological list
\[
    \mathrm{data}[v]
    =
    \bigl\{(p^{(v)}_1,d^{(v)}_1,t^{(v)}_1),\,
            (p^{(v)}_2,d^{(v)}_2,t^{(v)}_2),\,
            \dots\bigr\},
\]
where each tuple records that at predicted step $t$ the parent of $v$ became
$p$ and its level became $d$. When no tuple has timestamp at most $j$, the
state from $\hat{T}_0$ applies.

In the decremental and fully dynamic settings, this list can be interpreted
together with a cursor into $\In{v}$. When a candidate parent becomes invalid,
the cursor advances through $\In{v}$ until a valid parent at level
$\Level{v}-1$ is found. If no such parent exists, then $\Level{v}$ increases
by one and $v$ is placed in the lowering queue.

\subsection{Space Reduction by Skipping Small Pointer Moves}

To save space, we do not store every cursor movement. Fix an integer
$k\ge 1$ and $k\le n$, and store a new tuple for $v$ only when the cursor has advanced by at
least $k$ positions in $\In{v}$ since the previous stored tuple. Equivalently,
we keep only every $k$-th relevant state of the scan.

For a fixed vertex $v$ and a fixed level, this stores at most
$\lceil \dg{v}/k \rceil\leq  \dg{v}/k +1$ tuples. Since a vertex can change level at most
$O(n)$ times, the total storage becomes
\[
    O\!\left(\sum_{v\in V}n (\frac{ \dg{v}}{k}+1)\right)
    = O\!\left(\frac{mn}{k}+n^2\right).
\]
Thus, the space drops by a factor of $k$ compared with storing every
cursor state.

\subsection{Reconstructing a Snapshot}

To recover the tree at step $i$, we load the latest stored state before $i$ and
replay at most $k-1$ skipped cursor moves per affected vertex. Hence the
reconstruction cost increases by an additive factor of at most
$O(k)$ per vertex whose exact state was not stored, giving worst case cost of $O(\eta_e+\eta^*_v+nk)$.

\spaceOptDec*

\newpage

\section{Error Correction}
\label{sec:error-correction}

We present error correction for the fully dynamic setting, where trivial error correction also works for partially dynamic setting. The repair cost of
\textsc{batchDynamicUpdate} depends on $\eta_e$ and $\eta^*_v$ relative to
$\hat{T}_i$. When the real and predicted batches share a common subset of
updates $M$, we can start the repair from the later snapshot $\hat{T}_k$
instead of $\hat{T}_i$, where $k>i$.

\subparagraph*{Setup.}
From the divergence point (first error between $U$ and $\hat{U}$) $i$ and a update step $j > i$. Let
$U_{i..j} = \{\langle e_{i+1}, t_{i+1}\rangle,\dots, \langle e_j, t_j\rangle\}$ and
$\hat{U}_{i..j} = \{\langle\hat{e}_{i+1}, \hat{t}_{i+1}\rangle,\dots,\langle\hat{e}_j,\hat{t}_j\rangle\}$
denote the real and predicted batch updates.

\subsection{Trivial Error Correction}
If at some point $U_{i..j} = \hat{U}_{i..j}$, i.e., the set of updates are the same though their order might differ. Then $\hat{T}_j$ is a valid BFS tree of $G_j$ as both batches contain the same (edge, type) pairs. Hence, we reset the value of $i$ to $j$ for all future updates. 

To evaluate it efficiently, we simply check for a current update in $U_{i..j}$ if it has appeared in $\hat{U}_{i..j}$, storing the index $k$ of the highest such $<\hat{e}_k,\hat{t}_k>$. If all updates in $U_{i..j}$ have appeared with $k=j$ we perform the trivial error correction. Clearly, it requires $O(1)$ correction cost while processing each update.

\trivErr*


\subsection{Non-Trivial Error Correction}
When the batches are not set-equivalent, they may still share a large common subset. We now describe the error correction mechanisms for our algorithms.

\subsubsection*{Partially Dynamic}
Let $M$ be the largest prefix of $\hat{U}_{i...j}$ present in $U_{i...j}$, say $\hat{U}_{i...k}$. This can easily be computed in $O(\eta_e)$ time. Clearly, $M$ is the set of matched updates, and
define the unmatched sets
\[
    A = U_{i..j} \setminus M.
\]
Since the matched updates $M$ were applied to both $G_j$ and $\hat{G}_k$,
the two graphs differ only in $A$. Hence, in the new batch update we can simply use $U_{k...j}$ instead of the previous $U_{i...j}$.
Thus, we get the following:

\begin{theorem}
\textsc{batchInsertEdge}$(U_{k...j})$ from
$\hat{T}_k$ yields a valid BFS tree of $G_j$ in worst-case time
$O(\tilde{\eta}_e+\tilde{\eta}_v)$ after error correction cost of $O(\eta_e)$, where $\tilde{\eta}_e = |U_{i..j}|-|M|$ and $\tilde{\eta}_v$ is the vertex error measured relative to $\hat{T}_k$.
\end{theorem}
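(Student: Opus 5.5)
The plan is to reduce the statement to the correctness and running-time analysis of \textsc{batchInsertEdge} from \Cref{sec:incremental}, instantiated with the later snapshot $\hat{T}_k$ in place of $\hat{T}_i$. The argument has three parts: validity of the starting snapshot, correctness of the repair, and the cost accounting.

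\textbf{Step 1: the starting snapshot.} First we show that $\hat{T}_k$ is a valid BFS tree of a graph that differs from $G_j$ only by insertions. By the choice of $M$, every predicted insertion $\hat{e}_{i+1},\dots,\hat{e}_k$ occurs in $U_{i..j}$. Hence $\hat{E}_k = E_i\cup M \subseteq E_j$. Since only insertions occur, $E_j = \hat{E}_k \cup A$ with $A=U_{i..j}\setminus M$.

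So $G_j$ is obtained from $\hat{G}_k$ by inserting exactly the edges of $A$. The batch we pass to \textsc{batchInsertEdge} should therefore be $A$. The statement's notation $U_{k..j}$ should be read as this set of real updates not absorbed into $M$; re-inserting an edge already present is in any case harmless.

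\textbf{Step 2: correctness.} We apply \Cref{lem:incremental-correctness} with initial tree $\hat{T}_k$ (stored from preprocessing) and batch $A$. The lemma's induction only uses that the initial arrays form a correct BFS tree of the pre-batch graph, so it applies verbatim. It shows that the output is a valid BFS tree of $G_j$.

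\textbf{Step 3: running time.} We reuse the analysis of \textsc{batchInsertEdge}. Initialisation costs $O(|A|)=O(\tilde{\eta}_e)$. Each vertex whose level strictly decreases from $\hat{\texttt{L}}_k$ to $\texttt{L}_j$ scans its out-list once, which costs $O(\tilde{\eta}_v)$ with $\tilde{\eta}_v$ measured relative to $\hat{T}_k$. Levels only decrease, so each affected vertex is scanned once; the lazy removal from $\LRq{\cdot}$ guarantees this.

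\textbf{Step 4: correction cost.} Computing $k$ takes $O(\eta_e)$. Keep a hash set of the real updates in $U_{i..j}$, of size at most $\eta_e$. Then scan $\hat{e}_{i+1},\hat{e}_{i+2},\dots$ until the first predicted update missing from that set. The scan stops after at most $|U_{i..j}|+1=O(\eta_e)$ steps, since each matched predicted edge is distinct and present in the real batch. Loading $\hat{T}_k$ is the same snapshot access used in Case 2 of the online phase.

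The main obstacle is Step 1, specifically the claim that $\hat{G}_k\subseteq G_j$, so that the gap is purely insertions. This needs $M$ to be a prefix of the prediction and relies on the incremental setting. I would argue it carefully through the set equality $\hat{E}_k=E_i\cup\{\hat{e}_{i+1},\dots,\hat{e}_k\}$.
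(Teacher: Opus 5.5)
Your proof is correct and follows the paper's argument: restart from $\hat{T}_k$, observe that $G_j$ is $\hat{G}_k$ plus exactly the unmatched insertions $A=U_{i..j}\setminus M$, and reuse \Cref{lem:incremental-correctness} and the cost analysis of \textsc{batchInsertEdge}, with $|A|=j-k=\tilde{\eta}_e$. You are right to insist that the batch be $A$ rather than the literal suffix $U_{k..j}$, and your explicit $O(\eta_e)$ prefix scan fills in a step the paper only asserts; note that the literal suffix has the same size $j-k$ but can omit unmatched real insertions among $e_{i+1},\dots,e_k$, so reading it as $A$ is a necessary fix rather than a matter of harmless re-insertions.
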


\begin{theorem}
\textsc{batchDeleteEdge}$(U_{k...j})$ from
$\hat{T}_k$ yields a valid BFS tree of $G_j$ in worst-case time
$O(\tilde{\eta}_e+\tilde{\eta^*}_v)$ after error correction cost of $O(\eta_e)$, where $\tilde{\eta}_e = |U_{i..j}|-|M|$ and $\tilde{\eta^*}_v$ is the weighted vertex error measured relative to $\hat{T}_k$.
\end{theorem}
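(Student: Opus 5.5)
The statement concerns \textsc{batchDeleteEdge}$(U_{k..j})$ run from the snapshot $\hat{T}_k$. I would prove it by reduction to the decremental analysis already done: first establish that $\hat{T}_k$ is a legitimate starting state for the batch $U_{k..j}$, then invoke Lemma~\ref{lem:decremental-correctness} and the running-time analysis of \textsc{batchDeleteEdge} with $\hat{T}_k$ in place of $\hat{T}_i$.

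\textbf{Step 1 (validity of the starting state).} By definition of $M=\hat{U}_{i..k}$, every predicted deletion $\hat{e}_{i+1},\dots,\hat{e}_k$ occurs among the real deletions in $U_{i..j}$. Since the prefix up to $i$ agrees exactly, $\hat{E}_k=E_0\setminus(\{e_1,\dots,e_i\}\cup M)$. We also have $E_j=E_0\setminus(\{e_1,\dots,e_i\}\cup U_{i..j})$. Hence $E_j=\hat{E}_k\setminus A$ with $A=U_{i..j}\setminus M$. The edges of $A$ all lie in $\hat{E}_k$, because each deletion is applied once and deletions in $A$ are not in $M$. Consequently $G_j$ is obtained from $\hat{G}_k$ by a pure batch of $|A|$ deletions. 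The stored $\hLevel[k]{\cdot}$, $\hPar[k]{\cdot}$ and $\UP{\cdot}$ lists are exactly those of a BFS tree of $\hat{G}_k$, so they satisfy the precondition assumed in the base case of Lemma~\ref{lem:decremental-correctness}.

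The batch passed to the procedure is $A$; the statement's $U_{k..j}$ should be read as this set of unmatched real deletions. The adjacency lists must also reflect $\hat{G}_k$ before the batch is applied. I would handle this by loading them with the snapshot, or equivalently by noting that the online structure holds $G_j$'s lists and that $A$ is exactly the difference between $\hat{G}_k$ and $G_j$.

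\textbf{Step 2 (correctness).} Apply Lemma~\ref{lem:decremental-correctness} with initial graph $\hat{G}_k$ and batch $A$. This gives correct levels and parents for every vertex of $G_j$.

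\textbf{Step 3 (time).} The initialisation pass costs $O(|A|)=O(\tilde{\eta}_e)$. For the propagation, repeat the argument from the decremental analysis: each vertex $v$ is scanned at cost $O(\dg{v})$ per unit level increase, so the total is $O\bigl(\sum_v \dg{v}\,|\Level[j]{v}-\hLevel[k]{v}|\bigr)=O(\tilde{\eta^*}_v)$. Finally, add the cost of computing $M$. Each real update in $U_{i..j}$ is marked as matched with its predicted position, and we advance along $\hat{U}_{i..j}$ while entries are marked. This takes $O(|U_{i..j}|)=O(\eta_e)$ time using a hash or index from edges to predicted positions, which can be built in preprocessing.

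\textbf{Main obstacle.} The delicate point is Step 1, specifically the set-difference claim $E_j=\hat{E}_k\setminus A$. It relies on each edge being deleted at most once in both sequences, so that matching is well defined and no edge of $A$ was already removed in $\hat{G}_k$. I would argue this from the decremental model, where every deletion is of a currently present edge; that makes the matched deletions a common subset and leaves $A$ disjoint from $M$.
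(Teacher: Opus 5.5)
Your proposal is correct and follows the paper's argument. The paper's argument runs as follows: the matched prefix $M=\hat U_{i..k}$ has been applied in both $\hat G_k$ and $G_j$, so the two graphs differ only in $A=U_{i..j}\setminus M$, and the existing correctness lemma and running-time analysis of \textsc{batchDeleteEdge} carry over with $\hat T_k$ as the starting snapshot. Two of your additions are worth keeping. Reading the batch as $A$ rather than the literal suffix $e_{k+1},\dots,e_j$ is the right interpretation, and it is what the paper's own $\tilde\eta_e=|U_{i..j}|-|M|$ implicitly uses. Your explicit $O(\eta_e)$ marking procedure for finding $M$ supplies a step the paper only asserts.
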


\begin{remark} 
The corrected edge and vertex error satisfies $\tilde{\eta}_e \le \eta_e$, $\tilde{\eta}_v \le \eta_v$ and $\tilde{\eta^*}_v \le \eta^*_v$ as the change is level of vertices is monotonous. \end{remark}

\subsubsection*{Fully Dynamic}
For any $k$, let $M = U_{i..j} \cap \hat{U}_{i..k}$ be the matched updates, and
define the unmatched sets
\[
    P = \hat{U}_{i..k} \setminus M, \qquad A = U_{i..j} \setminus M.
\]
Since the matched updates $M$ were applied to both $G_j$ and $\hat{G}_k$,
the two graphs differ only in $P \cup A$. Starting from $\hat{T}_k$, we
construct a corrected batch that transforms $\hat{G}_k$ into $G_j$:
\begin{align*}
    E^*_{\mathrm{del}} &= \bigl\{(u,v):(u,v,+)\in P\bigr\}
                          \cup \bigl\{(u,v):(u,v,-)\in A\bigr\}, \\
    E^*_{\mathrm{ins}} &= \bigl\{(u,v):(u,v,-)\in P\bigr\}
                          \cup \bigl\{(u,v):(u,v,+)\in A\bigr\}.
\end{align*}
Computing the value of $k$ that minimizes $|E^*_{\mathrm{del}}|+|E^*_{\mathrm{ins}}|$ can be easily done in $O(\eta_e)$ time. 
Unmatched predicted insertions are reversed (deleted), unmatched predicted
deletions are reversed (inserted), and unmatched actual updates are applied
directly. We then call
\textsc{batchDynamicUpdate}$(E^*_{\mathrm{del}},E^*_{\mathrm{ins}})$ from
$\hat{T}_k$.
\begin{theorem}
\textsc{batchDynamicUpdate}$(E^*_{\mathrm{del}},E^*_{\mathrm{ins}})$ from
$\hat{T}_k$ yields a valid BFS tree of $G_j$ in worst-case time
$O(\min\{m,\,\tilde{\eta}_e+\tilde{\eta}^*_v\})$ after error correction cost of $O(\eta_e)$, where
\[
    \tilde{\eta}_e = |U_{i..j}|+|\hat{U}_{i..k}|-2|M|
\]
and $\tilde{\eta}^*_v$ is the weighted vertex error measured relative to
$\hat{T}_k$. 
\end{theorem}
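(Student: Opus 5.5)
The argument has three parts: graph equality, correctness of the repair, and the running time including the $O(\min\{m,\cdot\})$ fallback.

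\textbf{Step 1: graph identity.} I will show that applying the corrected batch to $\hat{G}_k$ gives exactly $G_j$. Write $G_i=\hat{G}_i$, which holds because $i$ is the last correct step. Then $\hat{G}_k$ is $G_i$ with $\hat{U}_{i..k}=M\cup P$ applied, and $G_j$ is $G_i$ with $U_{i..j}=M\cup A$ applied.

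To make this well defined I will interpret each update set by its net effect per edge. For every edge $(u,v)$, compare its membership in the two graphs. If $(u,v)$ occurs only in $M$, it is present in both or absent in both. If it occurs in $P$ with type $+$, it is present in $\hat{G}_k$ but not in $G_j$, so it must be deleted; this is exactly the first part of $E^*_{\mathrm{del}}$. The case of a $-$ in $P$ is symmetric and lands in $E^*_{\mathrm{ins}}$. Updates in $A$ are applied as given.

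An edge appearing in both $P$ and $A$ with the same type has the same net effect in both sequences, so I will note that it cancels. An edge appearing in both with opposite types is handled by taking the net type. This bookkeeping lets me treat each edge as appearing at most once in $E^*_{\mathrm{del}}\cup E^*_{\mathrm{ins}}$. Consequently $E^*_{\mathrm{del}}$ contains only edges of $\hat{G}_k$, $E^*_{\mathrm{ins}}$ contains only non-edges of $\hat{G}_k$, and $|E^*_{\mathrm{del}}|+|E^*_{\mathrm{ins}}|\le|P|+|A|=\tilde{\eta}_e$.

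\textbf{Step 2: correctness.} The stored $\hat{T}_k$, together with $\hat{\texttt{UP}}_k$, is a valid BFS tree of $\hat{G}_k$ with correct upper-parent lists. Running \textsc{batchDynamicUpdate} on the batch that turns $\hat{G}_k$ into $G_j$ therefore satisfies the hypotheses of Lemma~\ref{lem:fully-dynamic-correctness}, with $\hat{T}_k$ in the role of $\hat{T}_i$. The lemma never used that the batch consists of real updates; it used only that the starting structures are correct for the starting graph and that the batch transforms it into $G_j$. Applying it with $\ell=n$ gives a valid BFS tree of $G_j$.

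\textbf{Step 3: running time.} I will repeat the analysis of \textsc{batchDynamicUpdate} with $\hat{T}_k$ as the reference snapshot. The initialization pass costs $O(|E^*_{\mathrm{del}}|+|E^*_{\mathrm{ins}}|)=O(\tilde{\eta}_e)$. The propagation cost is bounded by $\sum_{v}\dg{v}\,|\Level[j]{v}-\hLevel[k]{v}|=\tilde{\eta}^*_v$.

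For the $\min$ with $m$, I will use the same fallback as before: run the batch procedure with a work counter, and if the counter exceeds $cm$, abort and recompute BFS from scratch in $O(m)$. This requires a guarded operation count, so that aborting costs no more than $O(m)$ plus rollback.

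The $O(\eta_e)$ correction cost is the step I expect to be the main obstacle. It needs the minimizing $k$ to be found by a single scan over $k$ from $i$ to roughly $i+O(\eta_e)$. The scan maintains $|M|$ incrementally using a hash set of $U_{i..j}$, so each step that extends $\hat{U}_{i..k}$ updates $|U_{i..j}|+|\hat{U}_{i..k}|-2|M|$ in $O(1)$.

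I also need to argue that no $k>i+2\eta_e$ can be optimal. Beyond that point $|\hat{U}_{i..k}|-|M|\ge k-i-\eta_e$, and this already exceeds the cost $\eta_e$ of the choice $k=i$. So the search range has size $O(\eta_e)$ and the total correction cost is $O(\eta_e)$.
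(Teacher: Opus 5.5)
Your proposal follows the paper's argument. You note that $\hat G_k$ and $G_j$ differ only on $P\cup A$, so the corrected batch turns $\hat G_k$ into $G_j$. You then inherit correctness from Lemma~\ref{lem:fully-dynamic-correctness}, and the $O(\min\{m,\tilde\eta_e+\tilde\eta^*_v\})$ bound from the analysis of \textsc{batchDynamicUpdate} with $\hat T_k$ as the reference snapshot plus the $O(m)$ fallback.

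You also justify several points the paper only asserts. The bound $k\le i+2\eta_e$, with incremental maintenance of $|M|$, justifies the $O(\eta_e)$ search, which the paper calls ``easily done''. The abort-on-work-counter fallback makes the $O(m)$ fallback explicit. Your net-effect bookkeeping keeps $E^*_{\mathrm{del}}$ and $E^*_{\mathrm{ins}}$ consistent with $\hat G_k$ even when an edge is updated repeatedly.
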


\begin{remark}
The corrected edge error satisfies $\tilde{\eta}_e \le \eta_e$,
so the batch size never increases. However, $\tilde{\eta}^*_v$ may be larger
or smaller than $\eta^*_v$ depending on whether the matched updates $M$ moved
predicted levels closer to or further from the true distances in $G_j$.
Non-trivial error correction is therefore most beneficial when $|M|$ is large
and the updates in $M$ are level-stabilising.
\end{remark}


\subsection{Effect on space-efficient algorithms}

After each error correction, an additional cost of $O(n \log n)$ and $O(nk)$ is incurred in the incremental and decremental cases, respectively, for the space-optimized variant. However, we can further improve the total extra cost paid for the incremental case.

\subparagraph*{Efficient search for increasing-time queries.}
\label{sec:method1-search}

When queries arrive in increasing time order, we maintain a pointer into
$\mathrm{data}[v]$ for each $v$ and advance it using exponential
(doubling) jumps ($+1,+2,+4,\dots$) until we reach or overshoot the next query
time. The overshoot is resolved by a binary search over the final interval.
For a monotone query sequence, each stored entry is skipped only a constant
number of times amortized, so the total work over all queries is
$O(\sum_v k_v)=O(n^2)$.

Thus, instead of paying $O(n\log n)$ after each potential $O(m)$ error corrections, we only require total $O(n^2)$ time.

    

\nonTrivErr*

\newpage

\section{Conclusion}
\label{sec:conclusion}

We have presented algorithms using the prediction models for maintaining a BFS tree
in a directed dynamic graph, covering the incremental, decremental, and fully
dynamic settings. Our algorithms exploit a predicted update sequence, which are
preprocessed offline to achieve running times that scale with prediction
quality rather than worst-case graph parameters. Specifically, the update
time is $O(1)$ when predictions are perfect, $O(\eta_e + \eta_v)$ for
incremental updates, and $O(\min(\eta_e + \eta^*_v,\, m))$ for decremental and
fully dynamic updates — never exceeding the classical $O(m)$ bound regardless
of prediction quality. Moreover, the total update time for both incremental and decremental algorithms is also bounded by $O(\eta_e + \eta^*_v)$ - never exceeding the classical $O(mn)$ total time of ES Trees~\cite{even-shiloach}. Thus, our algorithms satisfy the desired properties of consistency, robustness, and smooth degradation, of the algorithms with prediction model.
We also presented a space-efficient storage schemes for the incremental algorithm
that reduce the naive $O(mn)$ space to $O(n^2)$ at the expense of additional time complexity  of $O(n\log n)$ in worst case and total update time. Similarly, we presented a space-efficient storage scheme for the decremental algorithm
that reduces the $O(mn)$ space to $O(mn/k)$ at the expense of additional time complexity of $O(nk)$ in worst case and total update time.
Finally, we presented some trivial and non-trivial error correcting mechanisms to reduce the error measures on which our algorithms are dependent. 

Several directions remain open for future work. Firstly, to implement a full benchmark suite to evaluate our algorithms on
real dynamic graph datasets and measure the percentage improvement in update
time over classical approaches, particularly with respect to prediction error measures. 
On the theoretical side, it would be interesting to
study \emph{adaptive} prediction models that are refined online as real updates
arrive, potentially tightening the error measures over time. Finally, our algorithms at its core are simple extensions of the classical ES Trees for handling batch updates, which proves sufficient to give a prediction algorithm satisfying \textit{robustness, consistency, and smooth degradation}. Thus, our framework may be generalized for other dynamic graph problems using their corresponding batch update version.



\bibliography{ESA}








\end{document}